\documentclass{article}

\usepackage{microtype}
\usepackage{graphicx}
\usepackage{subcaption}
\usepackage{booktabs} 
\usepackage{multirow}
\usepackage{makecell}

\usepackage{hyperref}

\usepackage[preprint]{icml2026}

\usepackage{amsmath,amssymb,amsthm,mathtools,bbm}
\usepackage{amsfonts}
\usepackage{pifont}
\usepackage{algorithm}
\usepackage{algorithmic}
\usepackage{enumitem}  
\usepackage{tcolorbox}

\usepackage{hyperref}
\usepackage[capitalize, noabbrev]{cleveref}
\crefname{section}{Sec.}{Secs.}
\Crefname{section}{Section}{Sections}
\Crefname{table}{Table}{Tables}
\crefname{table}{Tab.}{Tabs.}
\Crefname{appsec}{Appendix}{Appendices}
\crefname{appsec}{Appendix}{Appendices}
\crefname{theorem}{Thm.}{Thms.}
\Crefname{theorem}{Theorem}{Theorems}
\Crefname{equation}{Equation}{Equations}
\crefname{equation}{Eq.}{Eqs.}

\newtheorem{theorem}{Theorem}
\newtheorem{lemma}{Lemma}

\newtheorem{corollary}{Corollary}
\theoremstyle{definition}
\newtheorem{definition}{Definition}
\theoremstyle{remark}
\newtheorem{remark}{Remark}

\newcommand{\KL}{\mathrm{KL}}
\newcommand{\E}{\mathbb{E}}
\newcommand{\1}{\mathbbm{1}}

\usepackage{xcolor} 

\newcommand{\topic}[1]{\noindent\textbf{#1}}

\usepackage[textsize=tiny]{todonotes}

\icmltitlerunning{}

\begin{document}

\twocolumn[
  \icmltitle{\emph{MemPot}: Defending Against Memory Extraction Attack with \\ Optimized Honeypots}



  \icmlsetsymbol{equal}{*}

  \begin{icmlauthorlist}
    \icmlauthor{Yuhao Wang*}{nus}
    \icmlauthor{Shengfang Zhai*}{nus}
    \icmlauthor{Guanghao Jin}{sustech}
    \icmlauthor{Yinpeng Dong}{thu}
    \icmlauthor{Linyi Yang}{sustech}
    \icmlauthor{Jiaheng Zhang}{nus}
  \end{icmlauthorlist}

  \icmlaffiliation{nus}{National University of Singapore}
  \icmlaffiliation{thu}{Tsinghua University}
  \icmlaffiliation{sustech}{Southern University of Science and technology}

  \icmlcorrespondingauthor{Shengfang Zhai}{shengfang.zhai@gmail.com}
  \icmlcorrespondingauthor{Linyi Yang}{yangly6@sustech.edu.cn}

  \icmlkeywords{Machine Learning, ICML}

  \vskip 0.3in
]



\printAffiliationsAndNotice{\icmlEqualContribution}

\begin{abstract}
Large Language Model (LLM)-based agents employ external and internal memory systems to handle complex, goal-oriented tasks, yet this exposes them to severe extraction attacks, and corresponding defenses are currently lacking.
In this paper, we propose \emph{MemPot}, the first theoretically verified defense framework against memory extraction attacks by injecting optimized honeypots into the memory. 
Through a two-stage optimization process, \emph{MemPot} generates trap documents that maximize the retrieval probability for attackers while remaining inconspicuous to benign users. 
We model the detection process as Wald’s Sequential Probability Ratio Test (SPRT) and theoretically prove that \emph{MemPot} achieves a lower average number of sampling rounds compared to optimal static detectors. 
Empirically, \emph{MemPot} significantly outperforms state-of-the-art baselines, achieving a 50\% improvement in detection AUROC and an 80\% increase in True Positive Rate under low False Positive Rate constraints. 
Furthermore, our experiments confirm that \emph{MemPot} incurs zero online inference latency and preserves the agent's utility on standard tasks, verifying its superiority in safety, harmlessness and efficiency.
\end{abstract}

\section{Introduction}
Large language model (LLM) is now becoming one of the most important AI technologies in daily life with its impressive performance~\citep{GPT-4,llm_survey_zhao_23}.
Building on recent advances in LLMs~\citep{achiam2023gpt,liu2024deepseek,grattafiori2024llama}, LLM-based agents are equipped with additional functionalities to perform complex, goal-oriented tasks~\citep{xi2023rise_LLMagent_survey}. A typical agent follows a structured pipeline that processes user instructions, gathers environmental information, retrieves relevant knowledge and past experiences, formulates action plans, and executes them in the environment~\cite{agent_survey_wang_24,hu2025memory}. This paradigm has enabled diverse real-world applications, including healthcare~\cite{health_agent_abbasian_24}, autonomous driving~\cite{AgentDriver_mao_colm24}, finance~\cite{ding2024largelanguagemodelagent}, code generation ~\cite{hong2024metagpt}, business management~\citep{salesforce_agentforce} and web interaction~\cite{ webshop_yao_neurips22,ReAct_yao_ICLR23}, positioning LLM agents as a central AI technology today. Typically, an agent is equipped with an external memory, which usually contains domain-specific knowledge, and an internal memory, where past experiences and user-interaction histories are stored.

\begin{figure}[t]
    \centering
    \includegraphics[width=0.8\linewidth, trim=40 280 10 10, clip]{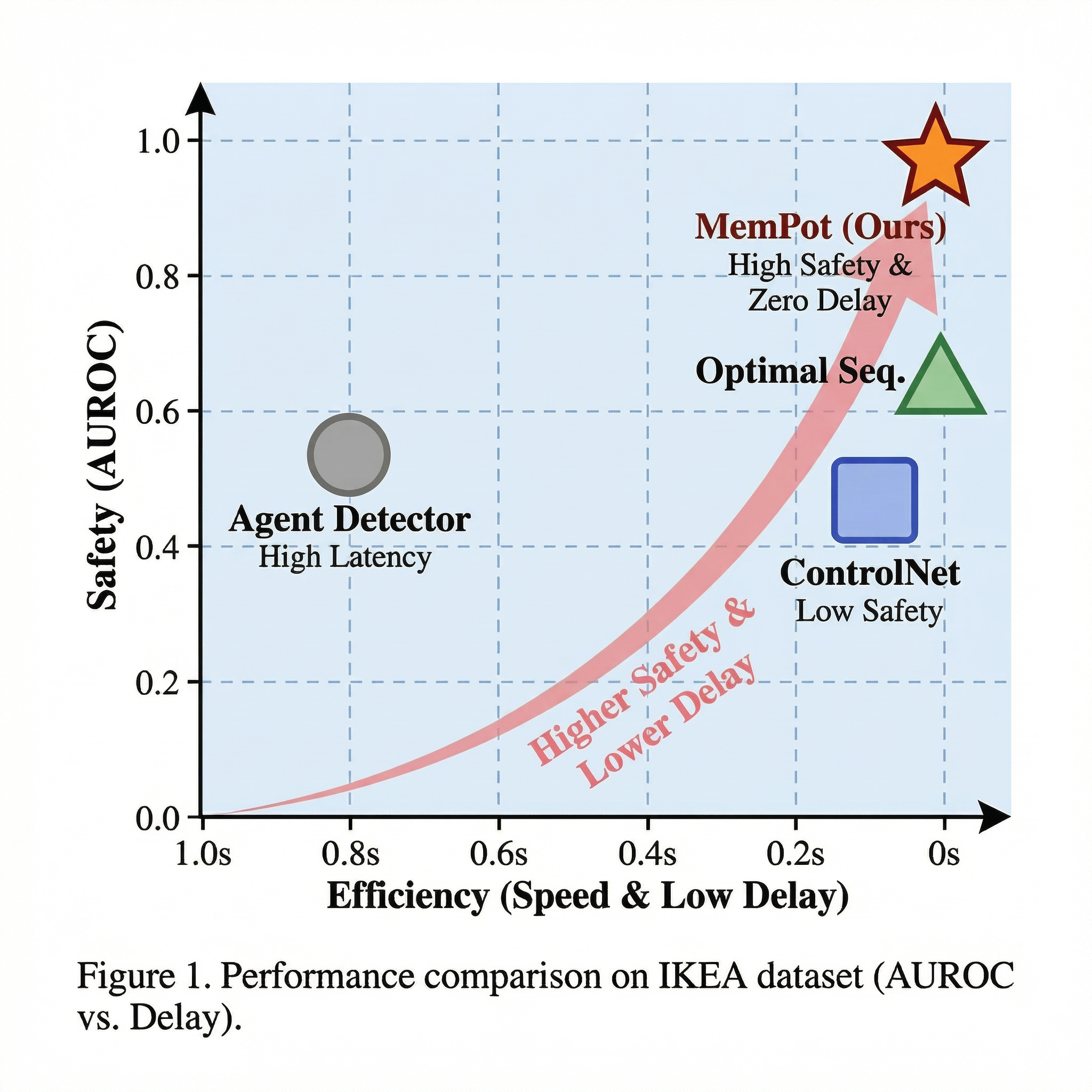}
    \caption{Performance comparison of \emph{MemPot} and existing methods (AUROC vs. Delay).}
    \label{fig:method-comparison}
    \vskip -2em
\end{figure}

Despite their rapidly increasing deployment, LLM-based agents pose serious risks of privacy and knowledge leakage. Modern LLM-based agents frequently retrieve information from external memory that contains private and high-value domain data~\cite{health_agent_abbasian_24,domain_chatbots_kulkarni_24,salesforce_agentforce}. While such retrieval improves task performance, it also introduces significant security risks. For example, the ForcedLeak vulnerability in Salesforce Agentforce enabled large-scale exfiltration of Customer Relationship Management (CRM) data~\cite{noma_forcedleak_agentforce}, underscoring the vulnerability of external memory to data leakage in real-world agent systems. In addition to external sources, agents also maintain internal memory modules that store long-term interaction histories, including past user instructions and agent-generated responses~\cite{survey_agent_memory}. Because these internal memories inherently contain sensitive user data, such as personal preferences and private records, their leakage can lead to serious privacy violations, such as exposure of medical information and purchase history~\cite{wang2025unveiling}.

Prior works has explored extraction attacks against external knowledge database~\cite{wang2025IKEA,jiang2024rag,cohen2024unleashing,zeng2024good,qi2025spill}, which can be applied to Retrieval
Augmented Generation (RAG) system~\cite{RAG_survey_Fan_KDD24} and agents' external memory~\cite{hu2025memory}. Recent works also discovered privacy attack on agents' internal memory, such as long-term interaction histories~\cite{wang2025unveiling}. 

Despite recent progress in defending extraction attacks~\cite{zhang2024intention,zeng2025sage,agarwal2024prompt,jiang2024rag,yao2025controlnet}, existing defense methods still have limitations. 
Most current defenses focus on per-query detection and rely on real-time inference with large language models or auxiliary detectors~\cite{zhang2024intention,zeng2024autodefense,yao2025controlnet}. As a result, they struggle to identify stealthy extraction attacks that employ benign-looking queries and gradual interaction patterns~\cite{wang2025IKEA, wang2025unveiling, jiang2024rag, cohen2024unleashing}, as they lack mechanisms to aggregate evidence across multiple retrieval steps. 
Moreover, their reliance on real-time inference introduces inference latency, which greatly undermines the interactive smoothness with users, limiting their practical applications.

To address the limitations of prior defenses, we propose \emph{MemPot}, a \textbf{zero-online-cost} extraction defense framework with theoretical guarantees. 
\emph{MemPot}
inserts optimized honeypot documents into the memory, and these honeypot documents are designed to attract attackers while remaining inconspicuous to benign users.
The main challenge of \emph{MemPot} is to ensure the quality of service (QoS) for normal users while maximizing the detection performance against attackers. 
The challenges thus involve: 
\textbf{(1)} This requires the honeypot documents to be sufficiently attractive to attackers while remaining inconspicuous to normal users. 
\textbf{(2)} The honeypot documents must be harmless and not mislead normal users, which necessitates careful design to avoid negative impacts on user experience. 
\textbf{(3)} To limit the impact of attackers, it is important to minimize the rounds of detection, as earlier detection can prevent further leakage of private information.

Our approach addresses these challenges through a two-stage optimization strategy to balance detection efficiency with Quality of Service (QoS). In the first stage, we optimize honeypot embeddings using contrastive loss to maximize the statistical separability between attacker and normal user retrieval patterns. By formalizing this detection task as a sequential hypothesis testing problem using Wald's Sequential Probability Ratio Test (SPRT), we theoretically prove that this optimization objective leads to minimized average detection rounds, surpassing the efficiency limits of any optimal static detector. In the second stage, we address safety requirements by inverting these optimized embeddings into concrete, benign documents, ensuring they remain harmless and do not mislead normal users. Empirically, \emph{MemPot} validates these theoretical guarantees, achieving a 50\% improvement in detection AUROC and an 80\% increase in TPR@1\%FPR over state-of-the-art baselines. Furthermore, our results confirm that \emph{MemPot} maintains near-zero detection delay and negligible impact on benign user utility, demonstrating its superiority in both defense efficiency and practical utility.
In summary, our main contributions are:

\begin{itemize}
    \item We propose \emph{MemPot}, the first general defense framework against memory extraction attacks. By employing a novel two-stage optimization strategy, \emph{MemPot} inserts harmless honeypots into the memory, and performs sequential detection based on accumulated retrieval evidence, achieving notable performance without affecting the Quality of Service (QoS).

    \item We formulate detection as a sequential hypothesis testing problem and apply Wald’s SPRT to construct an optimal detector, theoretically minimizing the expected detection rounds and outperforming static detectors without honeypots.
    
    \item Extensive experiments across two datasets and two agent settings show that \emph{MemPot} consistently achieves near-perfect detection accuracy against state-of-the-art extraction attacks with zero-online latency, and \emph{MemPot} have negligible impact on agent utility.

\end{itemize}

\section{Related Works}
\subsection{LLM Agents}
Large Language Models (LLMs) have demonstrated revolutionary capabilities in language understanding, reasoning, and generation~\citep{llm_survey_zhao_23}. 
Building on these advances, LLM agents use LLMs and supplement with additional functionalities to perform more complex tasks \citep{xi2023rise_LLMagent_survey}. 
Its typical pipeline consists of the following key steps: taking user instruction, gathering environment information, retrieving relevant knowledge and past experiences, giving an action solution based on the above information, and finally executing the solution \cite{agent_survey_wang_24}. This pipeline enables agents to support various real-world applications, such as healthcare \cite{health_agent_abbasian_24}, web applications \cite{webshop_yao_neurips22}, and autonomous driving \cite{AgentDriver_mao_colm24}. 

\begin{table}[t]
\footnotesize
  \centering
  \caption{Comparison of defense methods (details in~\cref{sec:setups}). \textbf{MemPot} achieves the best performance with zero online cost.
  }
  \label{tab:defense comparison overview}
  \resizebox{\columnwidth}{!}{
  \begin{tabular}{lcccc}
    \toprule
    \makecell{Defense\\Method} & \makecell{Distribution\\Change} & \makecell{Detection\\Paradigm} & \makecell{Detection\\Performance} & \makecell{Online\\Cost} \\
    \midrule
    ControlNet         & $\times$     & Single Turn & Low    & Middle \\
    Agent              & $\times$     & Single Turn & Low & High   \\
    Optimal Seq & $\times$     & Sequential  & Middle & Middle \\
    \textbf{MemPot}    & \checkmark   & \textbf{Sequential}  & \textbf{High}   & \textbf{Zero}   \\
    \bottomrule
  \end{tabular}
  }
  \vskip -1.5em
\end{table}

\subsection{Privacy Risk in Memory System}
The private information of an LLM agent mainly originates from two sources: 
(1) In external memory domain, agents usually employ RAG to retrieve high-value domain-specific records (e.g., patient prescriptions~\cite{chatdoctor_li_23}) to enhance generation~\cite{hu2025memory, RAG_Lewis_neurips20,domain_chatbots_kulkarni_24}. 
(2) In internal memory domain, the memory module emerges as a new risk source by archiving sensitive user-agent interactions, specifically pairs of private instructions and agent solutions~\cite{survey_agent_memory}.
While prior research has demonstrated data leakage risks in RAG systems through various extraction attacks~\cite{zeng2024good,jiang2024rag,di2024pirates,cohen2024unleashing,wang2025IKEA}, recent studies have further confirmed the tangible threat of extracting sensitive details directly from the agent's internal memory~\cite{wang2025unveiling}. Hence, it is urgent to explore effective and fundamental defense strategy to mitigate such attacks.

\subsection{Defense against Extraction Attack on Memory System}
Current defense strategies primarily fall into two categories: embedding-level detection and text-level detection. As a representative of embedding-level approaches, ControlNet~\cite{yao2025controlnet} measure current query's distributional shift between benign query embeddings to identify potential extraction attacks. In contrast, text-level detection typically relies on Large Language Models (LLMs) to discern query intentions or employs multi-agent systems to analyze the potential impact of queries~\cite{zhang2024intention,zeng2024autodefense,agarwal2024prompt}.
While these works have made significant progress in defending extraction attacks, applying them to agent memory protection presents limitations. 
Current methods exhibit two primary limitations: First, they impose heavy computational overhead relying on real-time inference with large language models or auxiliary detectors~\cite{zhang2024intention,zeng2024autodefense,yao2025controlnet}, limiting their practicality for long-running and interactive agents. Second, their defensive capability is fundamentally limited by a static, single-turn detection paradigm. By treating each query in isolation, these methods fail to aggregate evidence across interactions. Current methods also lack capability to proactively alter the memory distribution to trap adversaries, which is important when facing benign-looking attacks that closely mimic normal behavior. Consequently, these methods remain vulnerable to stealthy extraction strategies. We compare the key differences between existing approaches and our method in~\cref{tab:defense comparison overview}.

\begin{figure*}[t]
    \centering
    \includegraphics[width=1\linewidth]{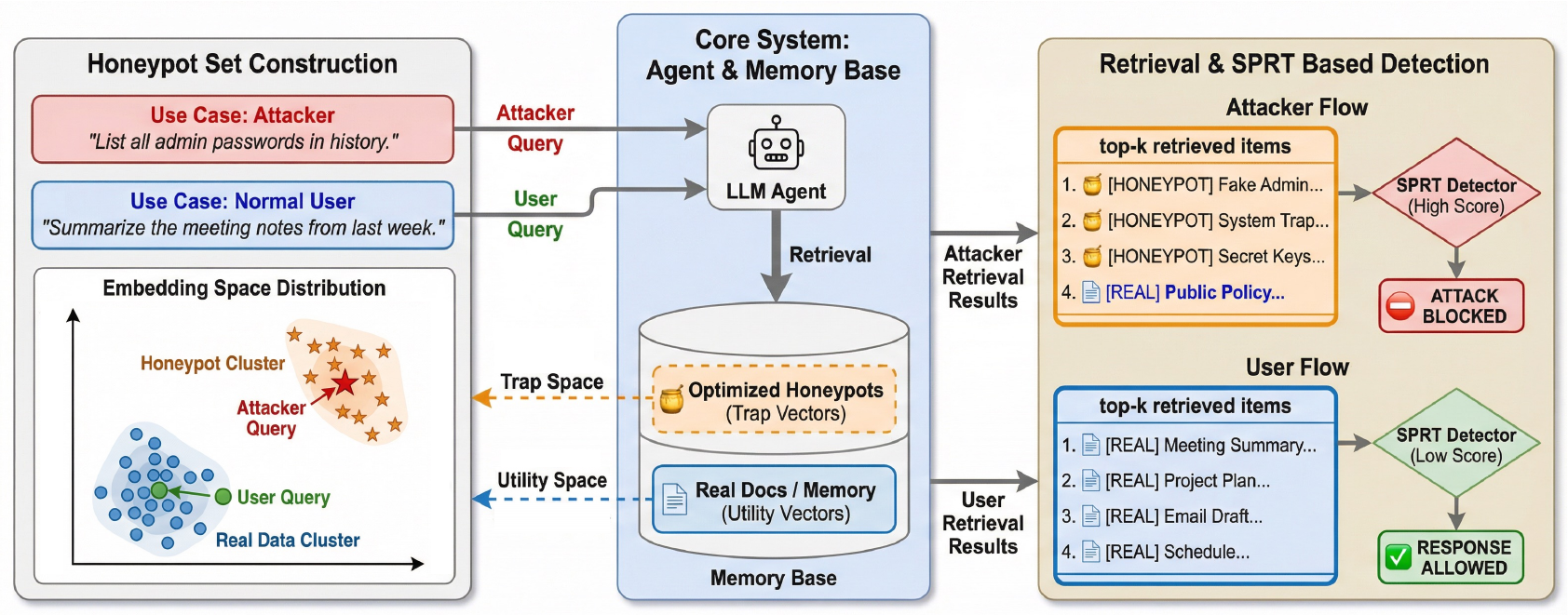}
    \caption{Overview of \emph{MemPot} Detection Framework.}
    \label{fig:main}
    \vskip -1em
\end{figure*}

\section{Preliminary}
\subsection{Threat Model}
\noindent\textbf{Defense Scenario. }
We consider a LLM-based agent service provider as the defender, who maintains a memory $\mathcal{D}$ and provides services to users. The defender aims to protect the privacy of the memory from potential attackers while ensuring high-quality service for normal users. We assume that attackers have the same access privileges as normal users and have no prior knowledge of the defense setting.

\noindent\textbf{Defender's Prior knowledge.} 
The defender is assumed to has all knowledge about the agent system, including the retriever, LLM, the content and index embeddings of the memory.
We assume the defender only have partial knowledge about the attacker, which means that the defender has access to a small set of attacker queries $\mathrm{Q}_a = \{q_1^{(1)}, q_2^{(1)}, \dots, q_M^{(1)}\}$, which can be collected from historical attack logs.

\noindent\textbf{Defender Goal. }
The defender aims to accurately detect attackers while minimizing the impact on normal users. The goal can be summarized as two parts:
(1) Detection Accuracy and Efficiency: The defender aims to maximize the detection accuracy while minimizing the average detection rounds, thereby reducing the potential leakage of private information.
(2) Quality of Service (QoS): The defender aims to ensure that the presence of honeypot documents does not significantly degrade the user experience for normal users, which can be measured by the false positive rate (FPR) of detection and empirical utility experiments.

\subsection{Sequential Hypothesis Testing Model}
\label{sec:sprt_model}
In this part, we formalize attacker detection with \emph{optimized Honeypots} as a sequential hypothesis testing problem. Let \(d\in\mathbb{N}\). The fixed document embeddings are \(\mathcal{E}_{\mathrm{doc}}=\{e_i\}_{i=1}^N\subset\mathbb{R}^d\). Trainable honeypot embeddings are \(\mathcal{E}_{\mathrm{pot}}(\theta)=\{u_j(\theta)\}_{j=1}^P\subset\mathbb{R}^d\), and the augmented database is:
\[
\mathcal{E}_{\mathrm{aug}}(\theta)=\mathcal{E}_{\mathrm{doc}}\cup\mathcal{E}_{\mathrm{pot}}(\theta).
\]

\noindent\textbf{Sequential Testing Model. }There are two query sources:
\begin{equation}
    \begin{aligned}
        &q\sim \mathcal{Q}_1 \quad \text{(attacker, hypothesis \(H_1\))},\\
        &q\sim \mathcal{Q}_0 \quad \text{(normal, hypothesis \(H_0\))}.
    \end{aligned}
\end{equation}

At round \(t\), we observe
$
O_t=\Phi(q_t;\theta),
$
where \(\Phi\) deterministically maps the query and the augmented index to the retrieval information (e.g., query,  returned indices and similarity scores). Let
$
f_{1,\theta}\ \text{and}\ f_{0,\theta}
$
denote the laws of a single-round observation \(O\) under \(H_1\) and \(H_0\), respectively. 
Given type-I/II error budgets \((\alpha,\beta)\in(0,1)^2\), our \emph{objective} is to design a sequential test with minimum stopping rounds \(N\) and terminal decision \(D_N\in\{H_0,H_1\}\) such that
\begin{equation} \label{eq:op-goal}
    \begin{aligned}
        \min \big\{ E_1[N],\,& E_0[N] \big\},
        \\
        \text{s.t.}\quad
        P_0(D_N=H_1)\le \alpha,& \quad
        P_1(D_N=H_0)\le \beta
    \end{aligned}
\end{equation}
where \(E_i[\cdot]\) denotes expectation under hypothesis \(H_i\). 

\noindent\textbf{Wald’s approximated SPRT. }
We utilize Wald’s approximated SPRT~\cite{wald1992sequential} to solve the optimization problem in~\ref{eq:op-goal}. 
Define the per-round log likelihood ratio (LLR) and accumulated log likelihood ratio
\begin{equation}\label{eq:llr}
\ell_\theta(O)=\log\frac{f_{1,\theta}(O)}{f_{0,\theta}(O)},\qquad
S_n=\sum_{t=1}^n \ell_\theta(O_t).
\end{equation}
Define information drift under two hypothesis:
\begin{equation}\label{eq:drifts}
\begin{aligned}
    &\mu_1(\theta)=\E_{f_{1,\theta}}[\ell_\theta(O)]
    =\KL(f_{1,\theta}\|f_{0,\theta}),
    \\
    &\mu_0(\theta)=\E_{f_{0,\theta}}[\ell_\theta(O)]
    =-\KL(f_{0,\theta}\|f_{1,\theta}).
\end{aligned}
\end{equation}
\citet{wald1992sequential} shows that for SPRT with boundaries \(A=\frac{1-\beta}{\alpha}\) and \(B=\frac{\beta}{1-\alpha}\) stops at
\begin{equation}\label{eq:sprt_boundry}
    N=\inf\{n:\ S_n\ge \log A\ \text{ or }\ S_n\le \log B\}.
\end{equation}
and at least one of the two errors will be controlled at the desired level:
\begin{align}
\mathbb{P}_0(D_N=H_1)+\mathbb{P}_1(D_N=H_0)\leq \alpha + \beta. \label{eq:inequality_Wald}
\end{align}
More precisely, the expected average sampling number (ASN)
\footnote{In this paper, we assume the observations are with Markov property conditional on the hypothesis (Proofs of ASN approximation for SPRT with Markov observations are detailed in Appendix.~\ref{sec:prelim},~\cref{lem:ASN_Markov}).}
in two hypothesis with negligible overshoot can be approximated by:
\begin{equation}\label{eq:wald-approx}
E_1[N] \approx \frac{|\log B|}{\mu_1(\theta)},
\qquad
E_0[N] \approx \frac{|\log A|}{|\mu_0(\theta)|}.
\end{equation}
Hence, for fixed \((\alpha,\beta)\), increasing \(\mu_1(\theta)\) and \(|\mu_0(\theta)|\) decreases the expected sample sizes that attain those error budgets.

\section{\emph{MemPot}: Optimization and Detection}
Our methodology is grounded in the insight that memory retrieval is an iterative and sequential process. Unlike computationally expensive per-query detection, we leverage the retrieval mechanism itself as a zero-cost indicator. We aggregate evidence across the interaction trajectory to distinguish attackers from benign users. 
To amplify these discriminative signals, \emph{MemPot} injects optimized honeypots designed to stimulate adversarial behavior without disrupting normal service.
In this section, we detail our framework (\cref{fig:main}):
We derive the optimization of vector-form honeypots based on SPRT theory and prove its optimality in minimizing detection rounds in \cref{sec:vecpot}. 
We then convert these vectors into harmless text-form documents via safety-constrained embedding inversion in \cref{sec:textpot}. 
We finally present practical Log-Likelihood Ratio (LLR) estimation methods to execute the sequential detection in \cref{sec:pot_detect}.

\subsection{Honeypot Vector in Semantic Embedding Space}
\label{sec:vecpot}

The trainable honeypot parameters \(\theta\) influence the observation distributions \((f_{1,\theta},f_{0,\theta})\) through retrieval mechanism, and thus determine both error probabilities and sampling efficiency. We will leverage this dependence to derive an optimization objective for vector-form honeypots that increases the statistical separability of \(f_{1,\theta}\) and \(f_{0,\theta}\) under the constraints of error budgets in \cref{eq:op-goal}.
\begin{theorem}[InfoNCE upper-bound by information drift, Proof in Appendix.~\ref{pf:infoNCE}]
\label{thm:infoNCE upb short}
Draw index $j\sim\mathrm{Unif}\{1,\dots,K\}$, then
$q_j\sim Q_1$ and $(q_i)_{i\neq j}\sim Q_0$ independently of $j$. 
For any score function $h: O\to\mathbb R$, define the InfoNCE loss
$$
\mathcal L_{\mathrm{NCE},K}(h;\theta)
:=-\,\E\!\left[\log
\frac{e^{h(\Phi(q_j);\theta)}}{\sum_{i=1}^{K} e^{h(\Phi(q_i);\theta)}}
\right].
$$
Then, for every $K\ge2$,
\begin{equation}\label{eq:nce-kl}
-\mathcal L_{\mathrm{NCE},K}(h;\theta)\ \le\ \mu_1(\theta)-\text{log}(K).
\end{equation}
\end{theorem}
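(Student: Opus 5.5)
The plan is to treat the InfoNCE objective as a $K$-way classification problem: recover the index $J=j$ of the attacker query from the tuple of observations $O_{1:K}:=(\Phi(q_1;\theta),\dots,\Phi(q_K;\theta))$. The argument then has two steps. First, the InfoNCE loss is a cross-entropy, so it is bounded below by the conditional entropy $H(J\mid O_{1:K})$. Second, the mutual information $I(J;O_{1:K})$ is bounded above by $\mu_1(\theta)=\KL(f_{1,\theta}\|f_{0,\theta})$ from \cref{eq:drifts}.

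\emph{Step 1 (cross-entropy lower bound).} For any score $h$, the softmax
\[
p_h(i\mid O_{1:K}) = \frac{e^{h(O_i;\theta)}}{\sum_{k=1}^{K} e^{h(O_k;\theta)}}
\]
is a valid conditional distribution on $\{1,\dots,K\}$. By definition, $\mathcal L_{\mathrm{NCE},K}(h;\theta)=\E[-\log p_h(J\mid O_{1:K})]$. Conditioning on $O_{1:K}$ and applying Gibbs' inequality (nonnegativity of the KL divergence between the true posterior of $J$ and $p_h$) gives
\[
\mathcal L_{\mathrm{NCE},K}(h;\theta)\ \ge\ H(J\mid O_{1:K}).
\]
Since $J$ is uniform, $H(J)=\log K$, so $H(J\mid O_{1:K}) = \log K - I(J;O_{1:K})$. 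Hence
\[
-\mathcal L_{\mathrm{NCE},K}(h;\theta)\ \le\ I(J;O_{1:K})-\log K .
\]

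\emph{Step 2 (bounding the mutual information by the drift).} Let $P_j$ denote the law of $O_{1:K}$ given $J=j$. This is the product measure with $f_{1,\theta}$ in coordinate $j$ and $f_{0,\theta}$ in every other coordinate. The observations are deterministic images of independent queries, so the coordinates are independent with these marginals. Let $\bar P=\frac1K\sum_j P_j$ be the marginal law of $O_{1:K}$, and take the reference measure $R=f_{0,\theta}^{\otimes K}$. The compensation identity
\[
I(J;O_{1:K})=\frac1K\sum_j \KL(P_j\|\bar P) = \frac1K\sum_j \KL(P_j\| R) - \KL(\bar P\|R)
\]
then gives $I(J;O_{1:K})\le \frac1K\sum_j \KL(P_j\|R)$. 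By the chain rule for product measures, each term equals $\KL(f_{1,\theta}\|f_{0,\theta})=\mu_1(\theta)$, because the $K-1$ matched coordinates contribute zero. Combining with Step 1 yields \cref{eq:nce-kl}.

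The main obstacle is Step 2. The direct bound via convexity points the wrong way, so the key is to use the compensation identity with the product reference $R$ rather than comparing against the mixture $\bar P$. I also need to justify it measure-theoretically: assume $f_{1,\theta}\ll f_{0,\theta}$ (otherwise $\mu_1=\infty$ and the claim is trivial) and check that all the KL terms are finite. Finally, the bound holds for every $h$ with no optimization over $h$ needed. The direction of the inequality is consistent because $I(J;O_{1:K})\le\log K$, so the right-hand side of Step 1 is non-positive, matching the fact that $\mathcal L_{\mathrm{NCE},K}\ge 0$.
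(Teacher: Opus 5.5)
Your proposal is correct and follows essentially the same route as the paper. You first use the cross-entropy/Bayes-risk bound $\mathcal L_{\mathrm{NCE},K}\ge H(J\mid O_{1:K})=\log K-I(J;O_{1:K})$, then bound $I(J;O_{1:K})$ against the product reference $f_{0,\theta}^{\otimes K}$ by dropping the nonnegative term $\KL(\bar P\,\|\,f_{0,\theta}^{\otimes K})$; your compensation identity is the paper's decomposition $I=\E[\log r(Y_J)]-\E[\log\tfrac1K\sum_i r(Y_i)]$ in divergence form, where the paper gets the densities $r(y_j)$ and $\tfrac1K\sum_i r(y_i)$ explicitly from its conditional product-density lemma.
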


\noindent\cref{thm:infoNCE upb short} shows that decreasing $\mathcal L_{\mathrm{NCE},K}(h;\theta)$ is equivalent to improve $\mu_1(\theta)$'s lower bound and therefore decreasing upper bound of $E_1[N]$ with~\cref{eq:wald-approx}.

In real scenario, memory system usually only returns the top-$k$ similar entries. We therefore define a top-$k$ masked similarity score here for tighter bound.
Let \(\mathcal{P}\) be the set of honeypots $\mathcal{E}_{\text{pot}}(\theta)$'s indices. Define a per-query score
\begin{equation}\label{eq:gk}
g_k(q;\theta):= \frac{1}{k} \sum_{j\in \mathcal{T}_k(q) \cap \mathcal{P}} s(q,E_j),
\end{equation}
where cosine similarity function $s(\mathbf{u},\mathbf{v}) = \frac{\mathbf{u}^\top \mathbf{v}}{|\mathbf{u}|\cdot|\mathbf{v}|}$, top-$k$ index set for query $\mathcal{T}_k(q)$ returns the indices of the $k$ largest elements of cosine similarity in $\mathcal{E}_{\text{aug}}(\theta)$.
Taking $g_k$ as $h$, we then get the final honeypot training loss
\begin{equation}
   \mathcal L_{\text{pot}}(\theta):= \mathcal L_{\mathrm{NCE},K}(g_k;\theta) + \beta\cdot\mathcal{L}_{\text{div}}(\theta),
\end{equation}
where $\beta$ is regularization parameter and $\mathcal{L}_{\text{div}}(\theta)$ is honeypot diversity loss defined by
\begin{equation}
    \mathcal{L}_{\text{div}}(\theta):= \sum_{\substack{1 \le i < j \le P\\u_i,u_j\in\mathcal E_{\text{pot}}}} \frac{2}{P(P-1)} s(u_i,u_j).
\end{equation}

\begin{theorem}[Advantage over static test, Proof in Appendix.~\ref{pf:adv_stat}]
\label{thm:adv_stat short}
Define \(\theta^\star\) the parameter obtained by minimizing \(\mathcal{L}_{\mathrm{NCE}}\), then
for any possibly static fixed-length test achieving \((\alpha,\beta)\) (i.e., tests without honeypot augment), the stopping time $N$ of static test and SPRT with \(\theta^\star\) satisfies:
\begin{equation}
    E_b[N]_{\mathrm{SPRT},\theta^{\star}}\ \le\ E_b[N]_{\mathrm{any},\text{static}},
\end{equation}
    with hypothesis index $b\in\{0,1\}$.
\end{theorem}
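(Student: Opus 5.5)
The plan combines three ingredients. First, Wald's approximations \cref{eq:wald-approx} give the ASN of the SPRT at $\theta^\star$. Second, a lower bound on the sample size of any fixed-length (static) test follows from the data-processing inequality for KL divergence. Third, we compare the information drifts with and without honeypots. Let $\theta_0$ denote the configuration without honeypots (equivalently, $\mathcal{E}_{\mathrm{pot}}$ empty or inactive), with per-round observation laws $f_{1,\theta_0},f_{0,\theta_0}$.

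\textbf{Step 1 (lower bound for static tests).} Fix any static test using $n$ rounds and achieving $(\alpha,\beta)$. Apply the data-processing inequality to the map from $(O_1,\dots,O_n)$ to the decision $D\in\{H_0,H_1\}$. This gives
\begin{equation*}
n\,\KL(f_{1,\theta_0}\|f_{0,\theta_0}) \;\ge\; d(1-\beta\,\|\,\alpha) \;=\; (1-\beta)\log\tfrac{1-\beta}{\alpha}+\beta\log\tfrac{\beta}{1-\alpha},
\end{equation*}
where $d(\cdot\|\cdot)$ is the binary KL divergence. Symmetrically, $n\,\KL(f_{0,\theta_0}\|f_{1,\theta_0})\ge d(\alpha\|1-\beta)$. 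Under the Markov assumption, \cref{lem:ASN_Markov} lets us use the chain rule, so the $n$-round KL is $n$ times the per-round drift asymptotically. Hence
\begin{equation*}
E_b[N]_{\mathrm{any},\text{static}} = n \;\ge\; \frac{d(\cdot)}{|\mu_b(\theta_0)|},
\end{equation*}
and to leading order in small $(\alpha,\beta)$ this is $\approx |\log\alpha|/\mu_1(\theta_0)$ (resp.\ $|\log\beta|/|\mu_0(\theta_0)|$).

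\textbf{Step 2 (SPRT at $\theta^\star$).} By Wald's identity, $E_b[S_N]=\mu_b(\theta^\star)\,E_b[N]$. With negligible overshoot, $S_N$ sits on the boundaries $\log A$ or $\log B$ with probabilities fixed by $(\alpha,\beta)$, so
\begin{equation*}
E_1[N]_{\mathrm{SPRT},\theta^\star} \;\approx\; \frac{d(1-\beta\|\alpha)}{\mu_1(\theta^\star)},
\end{equation*}
with the analogous expression for $E_0$. This refines \cref{eq:wald-approx}. The numerators then coincide exactly with the static lower bounds, so the claim reduces to the drift comparison $|\mu_b(\theta^\star)|\ge|\mu_b(\theta_0)|$.

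\textbf{Step 3 (drift comparison), the main obstacle.} The honeypot-free system lies inside the feasible family; for example, take honeypot vectors far from all queries so that they never enter $\mathcal{T}_k(q)$. In that case $f_{b,\theta}$ is a trivial refinement of $f_{b,\theta_0}$ and the drifts agree. It therefore suffices that $\theta^\star$ does at least as well in drift as this feasible point. The difficulty is that $\theta^\star$ minimizes $\mathcal{L}_{\mathrm{NCE}}$, not $\mu_b$ directly, and \cref{thm:infoNCE upb short} only gives the one-sided bound $\mu_1(\theta^\star)\ge \log K-\mathcal{L}_{\mathrm{NCE},K}(\theta^\star)$.

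My first attempt is to take $K$ large and $h$ in the optimal critic class. Optimizing over $h$ makes the InfoNCE bound tight in the limit, $\sup_h(\log K-\mathcal{L}_{\mathrm{NCE},K})\to\mu_1(\theta)$ as $K\to\infty$. Under that limit, minimizing $\mathcal{L}_{\mathrm{NCE}}$ over $\theta$ is the same as maximizing $\mu_1(\theta)$. Consequently $\mu_1(\theta^\star)\ge\mu_1(\theta_0)$, because $\theta_0$ is feasible.

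For $\mu_0$, I would argue in one of two ways. One option is to assume symmetric contrastive training, i.e.\ the reversed InfoNCE with the roles of $Q_0$ and $Q_1$ swapped, which by the same theorem lower-bounds $|\mu_0|$. The other is to note that augmenting the observation only refines it: observations at $\theta^\star$ include the honeypot hits in addition to the original retrieval information. By data processing, both KL divergences are then monotone non-decreasing relative to $\theta_0$.

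I expect to need this refinement argument explicitly. Formally, $\Phi(q;\theta_0)$ must be a measurable function of $\Phi(q;\theta^\star)$, which holds if removing the honeypot entries from the top-$k$ list, and re-querying the fixed index, recovers the original output. This deterministic post-processing gives $\KL$ monotonicity for both $b$. Combining it with Steps 1 and 2 yields $E_b[N]_{\mathrm{SPRT},\theta^\star}\le E_b[N]_{\mathrm{any},\text{static}}$ for $b\in\{0,1\}$.
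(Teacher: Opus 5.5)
\textbf{The gap is the $b=0$ half of Step~3, i.e.\ the closing ``refinement'' argument, which you end up using for both hypotheses.} The claim that $\Phi(q;\theta_0)$ is a measurable function of $\Phi(q;\theta^\star)$ fails in exactly the regime where honeypots can matter.

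Under top-$k$ retrieval, each honeypot entering $\mathcal{T}_k(q)$ evicts a real document. The indices and scores of the evicted documents cannot be read off the augmented output, and ``re-querying the fixed index'' needs $q$ itself. If $q$ is part of $O$, then for each fixed $\theta$ the two observations determine each other. For i.i.d.\ queries this gives $\mu_b(\theta)=\mu_b(\theta_0)$ for every $\theta$, so the argument only certifies the trivial equality case and never uses that $\theta^\star$ minimizes anything.

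A sanity check shows the claim cannot hold in general: it would imply that every honeypot configuration weakly increases both drifts. In fact, a honeypot retrieved equally often by both populations can evict precisely the document hits that separated them and drive the KL toward zero. With adaptive (Markov) attackers there is a further obstruction. The query law itself then depends on $\theta$, so a per-query post-processing map gives no data-processing inequality between the trajectory laws at $\theta^\star$ and at $\theta_0$.

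Monotonicity of $|\mu_0|$ therefore has to come from the training objective. But $\mathcal{L}_{\mathrm{NCE},K}$ is one-sided (positive from $\mathcal{Q}_1$, negatives from $\mathcal{Q}_0$), and by \cref{thm:infoNCE upb short} it controls only $\KL(f_{1,\theta}\|f_{0,\theta})$. You need your option (i) as an explicit hypothesis: $\theta^\star$ must simultaneously (near-)minimize the reversed loss, since minimizing one loss does not minimize the other. This is effectively what the paper does when Theorem~\ref{thm:goal} asserts that the $|\mu_0|$ case is ``identical by swapping roles of $H_0,H_1$.''

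For $b=1$ your idea is sound, but state it as the chain
\[
\mu_1(\theta^\star)\ \ge\ \log K-\mathcal{L}_{\mathrm{NCE},K}(h^\star;\theta^\star)\ \ge\ \log K-\inf_h\mathcal{L}_{\mathrm{NCE},K}(h;\theta_0)\ =\ I_K(\theta_0),
\]
where $(h^\star,\theta^\star)$ is the joint minimizer and $I_K(\theta_0)=I(J;\mathbf{Y})$ at $\theta_0$. Then $I_K(\theta_0)\to\mu_1(\theta_0)$ as $K\to\infty$, for instance when $\chi^2(f_{1,\theta_0}\|f_{0,\theta_0})<\infty$. This has three consequences:
\begin{itemize}
\item It needs tightness only at $\theta_0$, not your claim that minimizing the loss ``is the same as'' maximizing $\mu_1$, which would require uniform convergence in $\theta$.
\item It needs the critic class to approximate $\log(df_{1,\theta_0}/df_{0,\theta_0})$, which the fixed score $g_k$ does not; this mirrors the paper's richness assumption on $T_\phi$.
\item It yields $\mu_1(\theta^\star)\ge\mu_1(\theta_0)-\varepsilon_K$, the analogue of the paper's $\delta$, and makes explicit the tightness at $\theta_0$ that step (1) of Theorem~\ref{thm:goal} leaves implicit.
\end{itemize}

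Steps~1--2 are a valid route, different from the paper's. The paper first builds $\theta_0$ with $f_{b,\theta_0}=f_b^{\mathrm{static}}$ (Lemma~\ref{lem:theta0_exist}). Your ``far from all queries'' needs that lemma's assumptions, namely that the query support lies in a closed hemisphere and the $k$-th document similarity is a.s.\ positive; the laws then coincide exactly rather than being a refinement. The paper next invokes Wald--Wolfowitz optimality (Lemma~\ref{lem:sprt_optim}) to get $E_b[N]_{\mathrm{SPRT},\theta_0}\le E_b[N]_{\mathrm{any},\mathrm{static}}$ exactly, and only then compares $\theta^\star$ with $\theta_0$ through the Wald approximations.

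You instead pair the exact DPI lower bound $n\,|\mu_b(\theta_0)|\gtrsim d(\cdot)$ with Wald's refined ASN formula, whose numerator is the same binary KL. That avoids the optimality theorem and gives a quantitative ratio $E_b[N]_{\mathrm{SPRT},\theta^\star}/n\lesssim|\mu_b(\theta_0)|/|\mu_b(\theta^\star)|$. The paper's route keeps the static comparison exact and covers sequential competitors for free; your bound also extends to them via Wald's identity. Both routes still rest on negligible overshoot at $\theta^\star$.
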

We also prove honeypots-augmented SPRT's comparative advantage over static testing in~\cref{thm:adv_stat short}, which theoretically ensures shorter expected stopping time under fixed error control with proper optimized honeypots.

\begin{figure}[t]
  \centering
  \includegraphics[width=0.8\columnwidth, trim=5 80 5 10, clip]{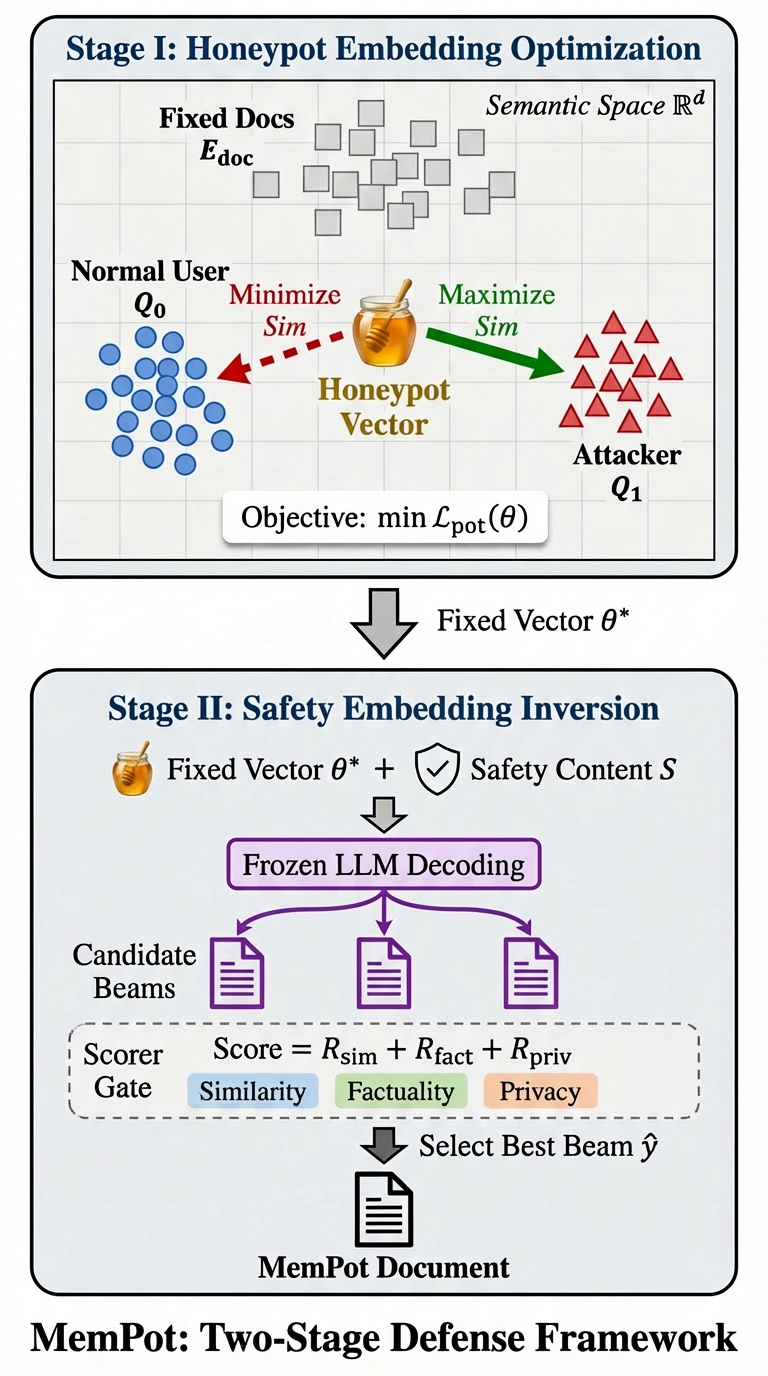}
  \caption{Two Stage Optimization Process of \emph{MemPot}.}
  \label{fig:two-stage}
  \vskip -0.5em
\end{figure}

\subsection{Generate Honeypot Documents from Vectors}
\label{sec:textpot}
To effectively defend against attackers who probe databases using semantical embeddings, honeypots' embeddings must align semantically with their text contents; otherwise, they are unlikely to be retrieved during such exploratory attacks. The honeypot texts must \ding{182} have sentence embeddings sufficiently close to pot vectors optimized in~\cref{sec:vecpot}; \ding{183} maintain factual integrity to minimize the risk of misleading or confusing users; \ding{184} reveal no private information originally contained in the database.
Inspired by~\cite{zhang2025advInv}, we utilize scorer-guided LLM decoding to generate honeypot texts. We use Safety Embedding Inversion to satisfy demands above, which will iteratively search tokens to maximize defined scores with beam-search algorithm.

We design three scorers corresponding to the three demands for honeypot texts and sum them up for overall performance. 
To encourage pot text $y$ to match the given pot vector $e_{\text{p}}$ and be readable, the \textbf{Inversion Scorer} is defined as:
\begin{equation}
    R_{\text{base}}(y,\mathbf e_p)
    =\lambda_{\text{emb}}\cdot s\big(E(y), \mathbf e_{\text{p}}\big)
    \;+\;\lambda_{\mathrm{read}}\cdot \mathrm{readable}(y),
\end{equation}
where $E(\cdot)$ is a sentence embedder with $\mathbf e=E(y)\in\mathbb{R}^d$, $s(\cdot,\cdot)$ is the cosine similarity function and $\mathrm{readable}(y)$ is a readability score defined in~\cite{zhang2025advInv}.
To maintain the factuality of pot texts, we design the \textbf{Factuality Scorer} which requires information in $y$ to be entailed by safety content $S$ (e.g., a topic abstract).
Let $\mathrm{Entail}(y_t\mid S)\in[0,1]$ be the nature language inference entailment (NLI) probability (provided by a pretrained model), the  factuality scorer is defined as:
\begin{equation}
    R_{\text{fact}}(y;S)=\lambda_{\text{fact}}\cdot\mathrm{Entail}(y\mid S).
\end{equation}
To preserve the privacy of the database, we also design a \textbf{Privacy Scorer} to constrain each pot text to have low similarity to the database.
For text $y$ and given document embeddings $\mathcal E_{\mathrm{doc}}$, the scorer is defined as:
\begin{equation}
    R_{\text{priv}}(y|\mathcal E_{\mathrm{doc}})=-\lambda_{\text{priv}}\cdot\max_{j} s\big(E(y),\mathcal E_{\mathrm{doc}}[j]\big).
\end{equation}
For a single honeypot vector, the final optimization objective is
\begin{equation}
\ \ \max_{y\in\mathcal Y}\;
R_{\mathrm{base}}(y,\mathbf e_p)+R_{\text{fact}}(y|S)+R_{\text{priv}}(y|\mathcal E_{\mathrm{doc}}).
\end{equation}
We then use algorithm detailed in~\cref{alg:adv_decoding} to attain the optimized pot texts.

\subsection{Detection with Honeypots}
\label{sec:pot_detect}
To achieve the optimality of SPRT, we need to estimate the log likelihood ratio in~\cref{eq:llr}. When accumulated log likelihood ratio excesses the SPRT boundaries in~\cref{eq:sprt_boundry}, the decision is made and agent system's responses are blocked (e.g.``Unanswerable."). 
Specifically, we designed three ways to approximate the log likelihood ratio.
Let the retrieved document set at step $t$ be $\mathcal{D}_t$, where 
$\mathcal{D}^+_t = \{ d_i \in \mathcal{D}_t \mid \text{is\_pot}(d_i)=1 \}$ denotes \emph{pot} documents 
and $\mathcal{D}^-_t = \{ d_i \in \mathcal{D}_t \mid \text{is\_pot}(d_i)=0 \}$ denotes \emph{non-pot} documents. 
Let $s(d_i)$ denote the similarity score of document $d_i$. The approximation methods are shown as follows:

\topic{1. Pot-NonPot Counts Ratio.}
        We use the ratio between the number of retrieved pot and non-pot documents:
        \begin{equation}\label{eq:cnt_llr}
        \widehat{r}^{\text{cnt}}_t 
        =\frac{|\mathcal{D}^{+}_t| + \varepsilon_{\text{cnt}}}
             {|\mathcal{D}^{-}_t| + \varepsilon_{\text{cnt}}}.
        \end{equation}
        
\topic{2. Pot-NonPot Similarity Ratio.}
        We weight the ratio by similarity scores:
        \begin{equation}\label{eq:sim_llr}
        \widehat{r}^{\text{sim}}_t 
        =
        \frac{ \sum_{d_i \in \mathcal{D}^{+}_t} s(d_i) + \varepsilon_{\text{sim}} }
             { \sum_{d_i \in \mathcal{D}^{-}_t} s(d_i) + \varepsilon_{\text{sim}} }.
        \end{equation}
        
\topic{3. Pot-NonPot Global Similarity Ratio.}
        We further focus on the most relevant evidence by only considering Top-$K$ documents:
        \begin{equation}\label{eq:glb_sim_llr}
        \widehat{r}^{\text{g\_sim}}_t
        =
        \frac{ \frac{1}{K} \sum_{d_i \in \mathcal{D}^{+}_{t,K}} s(d_i) + \varepsilon_{\text{sim}} }
             { \frac{1}{K} \sum_{d_i \in \mathcal{D}^{-}_{t,K}} s(d_i) + \varepsilon_{\text{sim}} },
        \end{equation}
        where $\mathcal{D}^{+}_{t,K}$ means the most similar $K$ honeypot documents, $\mathcal{D}^{-}_{t,K}$ means the most similar $K$ documents that are not honeypots.

The approximated accumulated log likelihood ratio is
$S_t \approx \Lambda_t = \sum_{\tau=1}^{t} \log \widehat{r}_\tau.$
Same as~\cref{eq:sprt_boundry}, the decision is made once the accumulated statistic crosses SPRT boundaries. Notably, the block only happens when terminal decision $D_t=H_1$, which means block happens when $\Lambda_t \ge \log A .$

\subsection{Implementation Details}
\noindent\textbf{Attacker Proxy.}
To approximate the unknown attack distribution $\mathcal{Q}_1$, we employ a neural proxy $A_{\omega}$ that mimics the attacker's behavior at the embedding level. $A_{\omega}$ takes the interaction history $\{o_{1:t-1}\}$ to predict the next query embedding $\hat{e}_{t}$, optimizing a cosine similarity objective to reproduce observed retrieval rankings. This enables the training of defensive strategies against black-box threats without requiring access to their internal algorithms.

\noindent\textbf{User Proxy.}
To approximate the diverse benign distribution $\mathcal{Q}_0$ without extensive real-world data, we leverage LLMs as human simulators. By prompting the LLM with specific intents and retrieval contexts, we synthesize realistic, multi-turn information-seeking trajectories. These generated sequences are then encoded to serve as a robust surrogate for the benign query space. Full description of the proxy building is detailed in Appendix.~\ref{sec:detail-pots-training}.

\begin{table*}[!ht]
\footnotesize
\centering
\caption{Performance of defense methods against external memory extraction attacks on HealthMagicCare and Pokemon datasets.}
\label{tab:external_performance_main}
\resizebox{\textwidth}{!}{
\begin{tabular}{c l cccc cccc}
\toprule
\multirow{3}{*}{Attack} & \multicolumn{1}{c }{\multirow{3}{*}{Defense}} & \multicolumn{4}{c }{HealthMagicCare} & \multicolumn{4}{c}{Pokémon} \\
\cmidrule(lr){3-6} \cmidrule(lr){7-10}
 & & AUROC & TPR@1\%FPR & TPR@10\%FPR & Delay & AUROC & TPR@1\%FPR & TPR@10\%FPR & Delay \\

\midrule
\multirow{3}{*}{RAG-Thief} 
 & ControlNet & 1.00 & 1.00 & 1.00 & 0.06 & 1.00 & \textbf{1.00} & 1.00 & 0.05 \\
 & Agent & 1.00 & \textbf{1.00} & \textbf{1.00} & 0.97 & 0.94 & 0.16 & 1.00 & 0.91 \\
 & MemPot & \textbf{1.00} & 0.96 & 0.99 & \textbf{0} & \textbf{1.00} & 0.98 & \textbf{1.00} & \textbf{0} \\

\midrule
\multirow{3}{*}{DGEA} 
 & ControlNet & 1.00 & 0.95 & 1.00 & 0.06 & 1.00 & 1.00 & 1.00 & 0.05 \\
 & Agent & 1.00 & 1.00 & 1.00 & 0.97 & 0.94 & 0.16 & 1.00 & 0.93 \\
 & MemPot & \textbf{1.00} & \textbf{1.00} & \textbf{1.00} & \textbf{0} & \textbf{1.00} & \textbf{1.00} & \textbf{1.00} & \textbf{0} \\

 \midrule
\multirow{3}{*}{IKEA} 
 & ControlNet & 0.46 & 0 & 0.02 & 0.05 & 0.88 & 0.02 & 0.56 & 0.04 \\
 & Agent & 0.50 & 0 & 0 & 0.98 & 0.20 & 0 & 0.13 & 0.95 \\
 & MemPot & \textbf{0.96} & \textbf{0.42} & \textbf{0.91} & \textbf{0} & \textbf{0.99} & \textbf{0.67} & \textbf{0.98} & \textbf{0} \\
\bottomrule
\end{tabular}
}
\end{table*}

\begin{table*}[t]
\footnotesize
\centering
\caption{Performance of defense methods against internal memory extraction attacks on EHRAgent and RAP web shopping agents.}
\label{tab:internal_performance_main}
\resizebox{\textwidth}{!}{
\begin{tabular}{c l cccc cccc}
\toprule
\multirow{3}{*}{Attack} & \multicolumn{1}{c}{\multirow{3}{*}{Defense}} & \multicolumn{4}{c}{EHRAgent} & \multicolumn{4}{c}{RAP WebShop} \\
\cmidrule(lr){3-6} \cmidrule(lr){7-10}
 & & AUROC & TPR@1\%FPR & TPR@10\%FPR & Delay & AUROC & TPR@1\%FPR & TPR@10\%FPR & Delay \\
\midrule
\multirow{3}{*}{$\text{MEXTRA}_\text{Cosine}$} 
 & ControlNet & 0.59 & 0 & 0.14 & 0.08 & 0.45 & 0 & 0.10 & 0.08 \\
 & Agent & 0.56 & 0 & 0.04 & 0.96 & 0.51 & 0.02 & 0.02 & 0.97 \\
 & MemPot & \textbf{0.99} & \textbf{0.97} & \textbf{1.00} & \textbf{0} & \textbf{1.00} & \textbf{0.94} & \textbf{1.00} & \textbf{0} \\
\midrule
\multirow{3}{*}{$\text{MEXTRA}_\text{Edit}$} 
 & ControlNet & 0.53 & 0.02 & 0.17 & 0.08 & 0.59 & 0.02 & 0.16 & 0.08 \\
 & Agent & 0.67 & 0 & 0.14 & 0.88 & 0.81 & 0.30 & 0.30 & 0.96 \\
 & MemPot & \textbf{0.97} & \textbf{0.86} & \textbf{0.99} & \textbf{0} & \textbf{1.00} & \textbf{1.00} & \textbf{1.00} & \textbf{0} \\
\midrule
\multirow{3}{*}{$\text{MEXTRA}_\text{General}$} 
 & ControlNet & 0.63 & 0.02 & 0.08 & 0.08 & 0.73 & 0 & 0.36 & 0.08 \\
 & Agent & 0.58 & 0 & 0.06 & 0.91 & 0.72 & 0.12 & 0.12 & 0.91 \\
 & MemPot & \textbf{0.94} & \textbf{0.81} & \textbf{0.96} & \textbf{0} & \textbf{1.00} & \textbf{1.00} & \textbf{1.00} & \textbf{0} \\
 \midrule
\multirow{3}{*}{IKEA} 
 & ControlNet & 0.40 & 0 & 0 & 0.10 & 0.71 & 0.06 & 0.34 & 0.10 \\
 & Agent & 0.51 & 0 & 0.02 & 1.01 & 0.47 & 0.02 & 0.02 & 0.96 \\
 & MemPot & \textbf{1.00} & \textbf{0.98} & \textbf{1.00} & \textbf{0} & \textbf{1.00} & \textbf{1.00} & \textbf{1.00} & \textbf{0} \\
\bottomrule
\end{tabular}
}
\end{table*}
\begin{table*}[t]
\footnotesize
\centering
\caption{Comparison between Optimal Sequential Detector and MemPot on external memory extraction attacks.}
\label{tab:health_pokemon_opt}
\resizebox{\textwidth}{!}{
\begin{tabular}{c l ccccc ccccc}
\toprule
\multirow{3}{*}{Attack} & \multicolumn{1}{c}{\multirow{3}{*}{Defense}} & \multicolumn{5}{c}{HealthMagicCare} & \multicolumn{5}{c}{Pokémon} \\
\cmidrule(lr){3-7} \cmidrule(lr){8-12}
 & & AUROC & TPR@1\%FPR & TPR@10\%FPR & Delay & \multicolumn{1}{c}{FDT} & AUROC & TPR@1\%FPR & TPR@10\%FPR & Delay & FDT \\

\midrule
\multirow{2}{*}{RAG-Thief} 
 & Optimal-Seq & 0.97 & 0.69 & 0.89 & 0.04 & \textbf{4} & \textbf{1.00} & 0.95 & \textbf{1.00} & 0.03 & 2 \\
 & MemPot & \textbf{1.00} & \textbf{0.96} & \textbf{0.99} & \textbf{0} & 9 & \textbf{1.00} & \textbf{0.98} & \textbf{1.00} & \textbf{0} & \textbf{1} \\

\midrule
\multirow{2}{*}{DGEA} 
 & Optimal-Seq & 0.90 & 0.39 & 0.64 & 0.04 & 11 & 0.97 & 0.70 & 0.88 & 0.04 & \textbf{1} \\
 & MemPot & \textbf{1.00} & \textbf{1.00} & \textbf{1.00} & \textbf{0} & \textbf{1} & \textbf{1.00} & \textbf{1.00} & \textbf{1.00} & \textbf{0} & \textbf{1} \\

 \midrule
\multirow{2}{*}{IKEA} 
 & Optimal-Seq & 0.70 & 0.11 & 0.27 & 0.03 & 17 & 0.78 & 0.22 & 0.34 & 0.4 & 14 \\
 & MemPot & \textbf{0.96} & \textbf{0.42} & \textbf{0.91} & \textbf{0} & \textbf{9} & \textbf{0.99} & \textbf{0.67} & \textbf{0.98} & \textbf{0} & \textbf{7} \\
 
\bottomrule
\end{tabular}
}

\end{table*}
\begin{table*}[t]
\footnotesize
\centering
\caption{Comparison between Optimal Sequential Detector and MemPot on internal memory extraction attacks.}
\label{tab:ehr_rap_opt}
\resizebox{\textwidth}{!}{
\begin{tabular}{c l ccccc ccccc}
\toprule
\multirow{3}{*}{Attack} & \multicolumn{1}{c}{\multirow{3}{*}{Defense}} & \multicolumn{5}{c}{EHRAgent} & \multicolumn{5}{c}{RAP WebShop} \\
\cmidrule(lr){3-7} \cmidrule(lr){8-12}
 & & AUROC & TPR@1\%FPR & TPR@10\%FPR & Delay & \multicolumn{1}{c}{FDT} & AUROC & TPR@1\%FPR & TPR@10\%FPR & Delay & FDT \\
\midrule
\multirow{2}{*}{$\text{MEXTRA}_\text{Cosine}$} 
 & Optimal-Seq & 0.70 & 0.16 & 0.27 & 0.03 & 22 & 0.68 & 0.06 & 0.22 & 0.04 & 38 \\
 & MemPot & \textbf{0.99} & \textbf{0.97} & \textbf{1.00} & \textbf{0} & \textbf{2} & \textbf{1.00} & \textbf{0.94} & \textbf{1.00} & \textbf{0} & \textbf{3} \\
\midrule
\multirow{2}{*}{$\text{MEXTRA}_\text{Edit}$} 
 & Optimal-Seq & 0.71 & 0.14 & 0.32 & 0.04 & 33 & 0.75 & 0.20 & 0.36 & 0.04 & 27 \\
 & MemPot & \textbf{0.97} & \textbf{0.86} & \textbf{0.99} & \textbf{0} & \textbf{8} & \textbf{1.00} & \textbf{1.00} & \textbf{1.00} & \textbf{0} & \textbf{1} \\
\midrule
\multirow{2}{*}{$\text{MEXTRA}_\text{General}$} 
 & Optimal-Seq & 0.69 & 0.16 & 0.28 & 0.04 & 42 & 0.73 & 0.12 & 0.26 & 0.04 & 35 \\
 & MemPot & \textbf{0.94} & \textbf{0.81} & \textbf{0.96} & \textbf{0} & \textbf{7} & \textbf{1.00} & \textbf{1.00} & \textbf{1.00} & \textbf{0} & \textbf{1} \\
 \midrule
\multirow{2}{*}{IKEA} 
 & Optimal-Seq & 0.78 & 0.20 & 0.36 & 0.05 & 12 & 0.79 & 0.17 & 0.34 & 0.06 & 14 \\
 & MemPot & \textbf{1.00} & \textbf{0.98} & \textbf{1.00} & \textbf{0} & \textbf{2} & \textbf{1.00} & \textbf{1.00} & \textbf{1.00} & \textbf{0} & \textbf{2} \\
\bottomrule
\end{tabular}
}

\end{table*}
\begin{table*}[!ht]
\setlength{\tabcolsep}{9pt}
\centering
\footnotesize
\caption{Comparison of different LLR estimation methods (\cref{sec:pot_detect}) on $\text{MEXTRA}_\text{Cosine}$.}
\label{tab:llr_estimation}
\resizebox{\textwidth}{!}{
\begin{tabular}{lcccccccc}
\toprule
\multirow{2}{*}{Method} & \multicolumn{4}{c}{EHRAgent} & \multicolumn{4}{c}{RAP-web} \\
\cmidrule(r){2-5} \cmidrule(l){6-9} 
 & AUROC & TPR@1\%FPR & TPR@10\%FPR & FDT & AUROC & TPR@1\%FPR & TPR@10\%FPR & FDT \\
\midrule
Count & 0.99 & 0.97 & \textbf{1.00} & 2 & \textbf{1.00} & \textbf{0.94} & \textbf{1.00} & 3 \\
Similarity & 0.74 & 0.65 & 0.75 & 4 & \textbf{1.00} & 0.92 & \textbf{1.00} & 3 \\
Global-Sim & \textbf{1.00} & \textbf{0.98} & \textbf{1.00} & \textbf{1} & \textbf{1.00} & 0.90 & \textbf{1.00} & \textbf{1} \\
\bottomrule
\end{tabular}
}
\end{table*}

\section{Experiments}
\subsection{Setups}\label{sec:setups} 
\textbf{Datasets and Evaluation Setting.} We evaluate \emph{MemPot} on four benchmarks: HealthMagicCare (20k)~\cite{LavitaHealth} and Pokémon (9.46k)~\cite{asoria2024pokemon} for external memory (top-4 retrieval); and EHRAgent~\cite{EHRAgent_shi_emnlp24} and WebShop RAP~\cite{webshop_yao_neurips22, RAP_kagaya_24} for internal memory (300 records, top-3/4 retrieval). We employ DeepSeek-v3.2 (685B)~\cite{liu2024deepseek} as the agent backbone and all-mpnet-base-v2~\cite{song2020mpnet} for sentence embeddings. We use mDeBERTa-v3-base~\cite{he2021debertav3} as NLI model. Honeypots are injected at a fixed ratio of 2\% for external and 4\% for internal datasets (ablation in Appendix.~\ref{sec:num-abla}). The SPRT detector uses the count-based LLR estimation (Eq.~\ref{eq:cnt_llr}) with both type-I/II error budgets $\alpha,\beta$ set to 0.1. Scalability is tested by extending internal memory by 200 records (Appendix.~\ref{sec:scalab-exp}). All experiments run on a single NVIDIA RTX 5090 GPU.

\textbf{Training Configuration.} We train a transformer-based attacker proxy on 256 samples per attack, augmented via similarity top-k random walks to simulate exploratory trajectories. Honeypots are optimized to attract this proxy while maintaining contrastive separation from 500 generated benign queries. To ensure scalability, we employ balanced k-means clustering to partition the corpus, generating honeypot embeddings for each cluster via a shared network to minimize computational cost.

\textbf{Attacks and Baselines.} We evaluate against RAG-Thief~\cite{jiang2024rag}, DGEA~\cite{cohen2024unleashing}, and IKE~\cite{wang2025IKEA} for external memory; and MEXTRA~\cite{wang2025unveiling} plus an adapted IKEA for internal memory. Baselines include ControlNet~\cite{yao2025controlnet} (100 anchor queries) and LLM-based Agent Detectors (Appendix.~\ref{sec:defender-details})~\cite{zhang2024intention,zeng2024autodefense}. Additionally, we implement a theoretical Optimal Sequential Detector (Optimal-Seq) to empirically validate \cref{thm:adv_stat short}. The details of all setups are shown in Appendix.~\ref{sec:setups-detail}.

\subsection{Evaluation Metrics}\label{sec:metrics}
We assess the performance from two perspectives:

\textbf{Detection Effectiveness.} We measure the ability to distinguish attackers from normal users using the Area Under the Receiver Operating Characteristics Curve (AUROC). To evaluate performance under strict service quality constraints, we specifically report the True Positive Rate at low False Positive thresholds (TPR@1\%FPR and TPR@10\%FPR).

\textbf{Detection Efficiency.} 
We evaluate the computational overhead and detection speed:
(1) Delay (s): The average additional online inference latency per turn introduced by the defense.
(2) First Detection Time (FDT): The average number of interaction turns (samples $N$ in SPRT) required to make a detection decision. Lower FDT indicates earlier interception of leakage.

\subsection{Performance Evaluation with Existing Baselines}
We conducted 64-round attacks on both domains. As shown in \cref{tab:external_performance_main}, \emph{MemPot} maintains $>0.96$ AUROC on external memory, while baselines collapse against stealthy IKEA attacks ($\approx 0.50$ AUROC). This performance gap widens in internal settings (\cref{tab:internal_performance_main}): against MEXTRA and the human-mimicking IKEA, baselines yield near-random results (e.g., 0.40 AUROC on EHRAgent), whereas \emph{MemPot} achieves near-perfect accuracy with zero online latency by effectively aggregating sequential evidence.

\begin{table}[!ht]
\centering
\caption{Utility impact of the \emph{MemPot} defense on standard task performance.}
\label{tab:utility_performance}
\resizebox{\columnwidth}{!}{
\begin{tabular}{l cc ccc}
\toprule
\multirow{2}{*}{Setting} & \multicolumn{2}{c}{RAP WebShop} & \multicolumn{3}{c}{Pokémon QA \& MCQ} \\
\cmidrule(lr){2-3} \cmidrule(lr){4-6}
 & Score & Success Rate & Acc & Rouge-L & Sim \\
\midrule
w/o pots & \textbf{67.1} & 44.6 & \textbf{0.98} & \textbf{0.67} & \textbf{0.75} \\
w/ pots & 65.4 & 43.8 & \textbf{0.98} & 0.66 & \textbf{0.75} \\
\bottomrule
\end{tabular}
}
\vskip -1.5em
\end{table}

\subsection{Comparison with Optimal Static Detector}
To empirically validate \cref{thm:adv_stat short}, we compare \emph{MemPot} against the Optimal Sequential Detector (Optimal-Seq). As shown in \cref{tab:health_pokemon_opt} and \cref{tab:ehr_rap_opt}, Optimal-Seq struggles against stealthy attacks due to the inherent indistinguishability of the original distribution (e.g., 0.70 AUROC, 22-round delay against MEXTRA-Cosine). In contrast, \emph{MemPot} actively reshapes the retrieval landscape to amplify adversarial signals, achieving near-perfect performance (0.99 AUROC) and reducing detection time to just 2 rounds. This confirms that proactive distribution modification is essential to break the limits of static detection.

\subsection{Utility Impact of \emph{MemPot}}
To verify utility preservation, we evaluate \emph{MemPot} on standard tasks on benign user traces. For internal memory dependent tasks, we evaluate Score and Success Rate (defined in RAP~\cite{RAP_kagaya_24}) on WebShop benchmark~\cite{webshop_yao_neurips22}. 
For external memory dependent tasks, we evaluate Accuracy (Acc), Similarity (Sim) and Rouge-L scores on QA and MCQ tasks, with the same setting of IKEA~\cite{wang2025IKEA}. As shown in~\cref{tab:utility_performance},
\emph{MemPot} has negligible impact on standard agent capabilities, with the Success Rate on the WebShop dropping only marginally from 44.6 to 43.8 and Pokémon QA metrics remain almost unchanged, confirming that our optimized honeypots are non-disruptive to benign user interactions. We also provide several examples of honeypot documents to show that they are harmless and cause no negative impact, even when retrieved by benign users (Appendix.~\ref{sec:pot-example}).

\subsection{Analysis of Different LLR Estimation Methods}
We compare three LLR estimation methods (\cref{sec:pot_detect}) on $\text{MEXTRA}_\text{Cosine}$ (\cref{tab:llr_estimation}). The Count-based method offers the best trade-off ($>0.99$ AUROC, FDT $\approx$ 2) and is adopted as our default. While Global-similarity achieves the fastest detection at the cost of FPR stability, the Similarity-based method proves suboptimal on complex tasks (e.g., 0.74 AUROC on EHRAgent).

\section{Conclusion}
We propose \emph{MemPot}, an active defense framework that safeguards agent memory by reshaping retrieval distributions with honeypots to amplify adversarial signals. Overcoming the theoretical limits of passive detection, \emph{MemPot} achieves near-perfect accuracy against advanced threats like IKEA and MEXTRA with zero online latency and preserving agent utility.

\newpage








\section*{Impact Statement}
This paper presents work aimed at advancing the security and privacy of Large Language Model (LLM) agents. By mitigating the risks of knowledge extraction in Retrieval-Augmented Generation (RAG) and agentic memory systems, our framework contributes to the protection of proprietary intellectual property and sensitive user data, particularly in high-stakes domains such as healthcare and finance. We believe this work fosters the trustworthy deployment of autonomous agents by ensuring that memory capabilities do not become vulnerabilities. We do not foresee immediate negative societal consequences, as our active defense mechanism is designed to distinguish malicious probing from legitimate usage, minimizing the risk of disrupting normal service.


\bibliography{reference}
\bibliographystyle{icml2026}

\newpage
\appendix
\onecolumn

\section{Details of Honeypot optimization}\label{sec:detail-pots-training}
\noindent\textbf{Attacker Proxy. }
In practical settings, the detailed algorithms and objectives of attackers are often not accessible, making it difficult to obtain an exact attack query distribution $\mathcal{Q}_1$. To address this, we construct an \textit{attacker proxy} that learns to reproduce the attacker’s behavior at the sentence embedding level. The proxy is realized as a neural network \(A_{\omega}\) that, given the historical interaction trajectory \(\{o_{1:t-1}\}\)—including previous query embeddings, retrieved document indices, and similarity scores—iteratively outputs the embedding of the next query \(\hat{e}_{t} = A_{\omega}(o_{1:t-1})\). 
Instead of reconstructing the attacker’s internal algorithm, the goal of \(A_{\omega}\) is to mimic the observable embedding-level dynamics of the attacker. The training objective minimizes the discrepancy between the generated embeddings and the observed ones using cosine similarity regression, ensuring that the proxy reproduces similar retrieval rankings under the same retriever. This design allows the honeypot system to train defensive strategies without explicit access to the attacker’s model parameters.

\noindent\textbf{User Proxy. }
For normal user modeling, we face a similar challenge: the real-world user query distribution is difficult to collect and often highly diverse. To approximate it, we leverage the large language model’s (LLM) human-simulator capability to synthesize realistic user behaviors. Specifically, we utilize a human-simulation pipeline that prompts the LLM with interaction intents and retrieval contexts, generating natural multi-turn query sequences that resemble genuine information-seeking behavior. These synthetic user queries are then encoded into embeddings to approximate the benign query distribution $\mathcal{Q}_0$.

\section{Details of Experiment Setups}
\subsection{Setups}
\label{sec:setups-detail}
\textbf{Datasets and Evaluation Setting.} 
We evaluate \emph{MemPot} across four diverse benchmarks covering both external memory retrieval and internal memory interaction scenarios:
For external memory, we utilize the HealthMagicCare (20k rows)~\cite{LavitaHealth} and Pokémon (9.46k rows)~\cite{asoria2024pokemon} datasets, representing privacy-intensive healthcare inquiries and entity-heavy gaming knowledge, respectively. We default retrieve 4 documents each turn in experiments. For internal memory, following the setting in MEXTRA~\cite{wang2025unveiling}, we use the generated internal memory logs for EHRAgent~\cite{EHRAgent_shi_emnlp24} and RAP on WebShop~\cite{webshop_yao_neurips22, RAP_kagaya_24}. The standard setting contains 300 interaction records. We follow the same retrieval number setting (top-4 records for EHRAgent and top-3 for RAP WebShop) in origin paper~\cite{EHRAgent_shi_emnlp24, RAP_kagaya_24}. For scalability analysis (Appendix.~\ref{sec:scalab-exp}), we extend this by appending an additional 200 records. We employ DeepSeek-v3.2 (685B) as the backbone model of both agents and all-mpnet-base-v2 as the retrieval embedding model. We use mDeBERTa-v3-base~\cite{he2021debertav3} as NLI model. For each dataset/agent, we use the same honeypot set against different attacks. We inject honeypots at a ratio of 2\% for external memory datasets and 4\% for internal memory datasets by default (We evaluate the impact of honeypots ratio in Appendix.~\ref{sec:num-abla}). For the SPRT detector, we utilize the Pot-NonPot Counts Ratio (Eq.~\ref{eq:cnt_llr}) as the default method for LLR estimation with both type-I/II error budgets $\alpha,\beta$ set to 0.1. All experiments are conducted on a single NVIDIA RTX 5090 GPU.

\textbf{Training Set Configuration.} 
We train a transformer-based attacker proxy using 256 samples for each attack type. To robustly capture the exploratory nature of extraction attacks, we augment the training data using a similarity top-k random walk, which simulates an attacker's trajectory by iteratively sampling the next query from the semantic neighbors of the current retrieval. The honeypots are trained by interacting with this proxy, while simultaneously optimizing contrastive loss against 500 LLM-generated benign human queries for each target dataset. To support honeypots' training on large-scale knowledge bases, we first apply balanced k-means clustering to partition the document corpus into equal-sized sets. The document embeddings within each set are fed into a shared network to generate a corresponding honeypot embedding, significantly reducing the computational burden of generation.

\textbf{Attacks and Defense Baselines.} 
We consider the following attacks: 
For external memory, we evaluate RAG-targeted adaptive attacks including RAG-Thief~\cite{jiang2024rag}, DGEA~\cite{cohen2024unleashing}, and IKEA~\cite{wang2025IKEA}. For internal memory, we evaluate agent-targeted MEXTRA~\cite{wang2025unveiling}. Additionally, we adapt IKEA for internal memory extraction to assess robustness against adaptive attacks.
We broadly consider existing defense methods, including ControlNet~\cite{yao2025controlnet} and Agent Detector~\cite{zhang2024intention,zeng2024autodefense, agarwal2024prompt}. Specifically, we use 100 benign anchor queries for ControlNet initiation and DeepSeek-v3.2 (685B) as the Agent Detector backbone (see Appendix~\ref{sec:defender-details}).
We additionally design a transformer-based theoretical optimal sequential detector (Optimal-Seq) for further comparison and validation of \cref{thm:adv_stat short}. 

\subsection{Agent Detector Setting} \label{sec:defender-details}
Referring to mitigation suggestions in ~\citep{zeng2024good,jiang2024rag,anderson2024my, zhang2024intention, zeng2024autodefense}, We apply the agent detector with hybrid paradigms, including intention detection, keyword detection and defensive instruction. Specifically, we use DeepSeek-v3.2 (685B) as the agent detector backbone. The response generation process integrated with the detector is shown as follows:
For an input query $q$, defense first occurs through intent detection~\citep{zhang2024intention} and keyword filtering~\citep{zeng2024good}:
\begin{equation}
    q_{\text{defended}} = 
\begin{cases} 
    \emptyset, &  D_{\text{intent}}(q) \lor D_{\text{keyword}}(q) = 1 \\ 
    q, & \text{otherwise}
    \end{cases},
\end{equation}
where $\emptyset$ enforces an \text{``unanswerable''} response, $D_{\text{intent}}(\cdot)$ and $D_{\text{keyword}}(\cdot)$ are detection functions which return True when detecting malicious extraction intention or words. 
When $q_{\text{defended}} \neq \emptyset$, generation combines the retrieval context $\mathcal{D}^{K}_q$ is:
\begin{equation}
    y = \text{LLM}\big(\textrm{Concat}(\mathcal{D}^{K}_q) \oplus q_{\text{defended}} 
    \oplus p_{\text{defense}}\big
    ),
\end{equation}
where defensive prompt $p_{\text{defense}}$~\citep{agarwal2024prompt} constrains output relevance by prompting LLM only answer with related part of retrievals, and enforces LLM not responding to malicious instruction with provided examples.

\section{Additional Experiments}
\subsection{Scalability of MemPot under Memory Updates} 
\label{sec:scalab-exp}
To address the dynamic nature of agent systems where memory is updated in real-time, we evaluate the scalability of \emph{MemPot} using a batch-based update strategy. Leveraging the localized nature of retrieval, we hypothesize that honeypots can be optimized independently for different memory batches and directly merged into a unified defense set. 

\textbf{Setup.} We simulate a memory update scenario by appending 200 new interaction records (sourced from the MEXTRA dataset~\cite{wang2025unveiling}) to the standard 300-record internal memory, resulting in a total of 500 records. We compare two implementation strategies: (1) \textbf{MemPot-E2E}: The computationally expensive upper bound, where honeypots are re-optimized globally on the complete 500-record dataset from scratch. (2) \textbf{MemPot-Stack}: The scalable approach, where we retain the honeypots for the initial 300 records and simply append a new set of honeypots optimized specifically for the 200 update records. To ensure a fair comparison, we maintain the honeypot ratio at 4\% for both strategies.

\textbf{Results.} As shown in~\cref{tab:stack-agent}, \emph{MemPot-Stack} achieves detection performance nearly identical to the holistic \emph{MemPot-e2e} across all attack vectors. Notably, in challenging scenarios like MEXTRA-General on EHRAgent, the stacked approach maintains superior robustness (0.98 AUROC) compared to the end-to-end baseline (0.94 AUROC). These results confirm that \emph{MemPot} supports efficient, modular updates: as the agent's memory grows, new honeypots can be seamlessly integrated without the need for global retraining, ensuring continuous protection with minimal computational overhead.
\begin{table*}[t]
\caption{Performance evaluation of stacked MemPot and end-to-end trained MemPot on EHRAgent and Web-Shopping RAP agent.}
\label{tab:stack-agent}
\centering
\footnotesize
\begin{tabular}{c l ccccc ccccc}
\toprule
\multirow{4}{*}{Attack} & \multicolumn{1}{c}{\multirow{4}{*}{Defense}} & \multicolumn{5}{c}{EHRAgent} & \multicolumn{5}{c}{RAP-web} \\
\cmidrule(lr){3-7} \cmidrule(lr){8-12}
 & & AUROC & \makecell{TP@\\1\%FP} & \makecell{TP@\\10\%FP} & Delay & FDT & AUROC & \makecell{TP@\\1\%FP} & \makecell{TP@\\10\%FP} & Delay & FDT \\
\midrule
\multirow{3}{*}{$\text{MEXTRA}_\text{Cosine}$} 
 & Optimal-Seq & 0.70 & 0.16 & 0.27 & 0.03 & 22 & 0.68 & 0.06 & 0.22 & 0.04 & 38 \\
 & MemPot-E2E & \textbf{0.99} & \textbf{0.67} & 0.74 & \textbf{0} & \textbf{2} & 0.98 & \textbf{1.00} & \textbf{1.00} & \textbf{0} & \textbf{1} \\
 & MemPot-Stack & 0.98 & 0.65 & \textbf{0.75} & \textbf{0} & 3 & \textbf{0.99} & \textbf{1.00} & \textbf{1.00} & \textbf{0} & \textbf{1} \\
\midrule
\multirow{3}{*}{$\text{MEXTRA}_\text{Edit}$} 
 & Optimal-Seq & 0.71 & 0.14 & 0.32 & 0.04 & 33 & 0.75 & 0.20 & 0.36 & 0.04 & 27 \\
 & MemPot-E2E & \textbf{0.98} & 0.56 & \textbf{0.74} & \textbf{0} & \textbf{4} & \textbf{1.00} & \textbf{1.00} & \textbf{1.00} & \textbf{0} & \textbf{1} \\
 & MemPot-Stack & 0.97 & \textbf{0.58} & 0.72 & \textbf{0} & 9 & \textbf{1.00} & \textbf{1.00} & \textbf{1.00} & \textbf{0} & \textbf{1} \\
\midrule
\multirow{3}{*}{$\text{MEXTRA}_\text{General}$} 
 & Optimal-Seq & 0.69 & 0.16 & 0.28 & 0.04 & 42 & 0.73 & 0.12 & 0.26 & 0.04 & 35 \\
 & MemPot-E2E & 0.94 & 0.81 & 0.85 & \textbf{0} & 7 & \textbf{1.00} & \textbf{1.00} & \textbf{1.00} & \textbf{0} & \textbf{1} \\
 & MemPot-Stack & \textbf{0.98} & \textbf{0.82} & \textbf{0.88} & \textbf{0} & \textbf{5} & \textbf{1.00} & \textbf{1.00} & \textbf{1.00} & \textbf{0} & \textbf{1} \\
\bottomrule
\end{tabular}

\end{table*}

\subsection{Impact of Honeypots Number}\label{sec:num-abla}
We investigate the trade-off between ratio of injected honeypots and detection robustness. \cref{tab:ablation_pots} reports the detection performance on the RAP WebShop with 300 memory records as the honeypots ratio increases from 0.6\% to 4\%. While the AUROC remains saturated at 1.00 even with minimal injection, increasing the honeypot count significantly enhances detection speed and sensitivity at strict thresholds. Specifically, increasing pots from 0.6\% to 4\% improves the TP@1\%FP from 0.77 to 0.94 and reduces the First Detection Time (FDT) from 7 rounds to just 2 rounds. This trend validates that a denser honeypot distribution amplifies the adversarial signal, allowing for faster interception of attacks, though a small budget (e.g., 2\%) already yields near-optimal performance.
\begin{table}[t]
\footnotesize
\caption{Ablation study on the number of pots (honeypots) and their impact on detection performance on $\text{MEXTRA}_\text{cosine}$ in RAP WebShop.}
\label{tab:ablation_pots}
\centering
\begin{tabular}{lcccc}
\toprule
\multicolumn{1}{c}{Ratio} & AUROC & TP@1\%FP & TP@10\%FP & FDT \\
\midrule
0.6\% & \textbf{1.00} & 0.77 & \textbf{1.00} & 7 \\
1.0\% & \textbf{1.00} & 0.79 & \textbf{1.00} & 4 \\
2.0\% & \textbf{1.00} & 0.90 & \textbf{1.00} & \textbf{2} \\
4.0\% & \textbf{1.00} & \textbf{0.94} & \textbf{1.00} & \textbf{2} \\
\bottomrule
\end{tabular}
\end{table}

\section{Theoretical Preliminary}
\label{sec:prelim}
\begin{definition}[Fixed parametric partition]
Let \(T_\phi:\mathcal{O}\to\mathcal{Y}\) be a fixed-form parametric partition (its \emph{form} does not change during training; only parameters \(\phi\) change). Define the push-forwards
\[
P_{1,\theta,\phi}=T_{\phi\#}f_{1,\theta},\qquad
P_{0,\theta,\phi}=T_{\phi\#}f_{0,\theta},
\]
and the divergence
\(
\Phi_T(\theta,\phi)=\KL(P_{1,\theta,\phi}\|P_{0,\theta,\phi}).
\)
\end{definition}

\begin{lemma}[Data Processing Inequality (DPI)]\label{lem:DPI}
For any measurable \(T_\phi\),
\[
\KL(f_{1,\theta}\|f_{0,\theta})\ \ge\ \KL(T_{\phi\#}f_{1,\theta}\|T_{\phi\#}f_{0,\theta})
=\Phi_T(\theta,\phi).
\]
Likewise,
\(
\KL(f_{0,\theta}\|f_{1,\theta})\ge \KL(P_{0,\theta,\phi}\|P_{1,\theta,\phi}).
\)
\end{lemma}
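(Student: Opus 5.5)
We will prove Lemma~\ref{lem:DPI} directly from the chain rule for KL divergence, applied to the joint law of the pair \((O,T_\phi(O))\). Fix \(\theta,\phi\). Write \(Y=T_\phi(O)\), and let \(f_{b,\theta}^{O,Y}\) be the joint law of \((O,Y)\) under \(H_b\), for \(b\in\{0,1\}\). Since \(Y\) is a deterministic measurable function of \(O\), this joint law is the push-forward of \(f_{b,\theta}\) under \(o\mapsto(o,T_\phi(o))\), and that map is injective. Hence
\[
\KL\big(f^{O,Y}_{1,\theta}\,\big\|\,f^{O,Y}_{0,\theta}\big)=\KL(f_{1,\theta}\|f_{0,\theta}).
\]
One way to see this is that the Radon--Nikodym derivative of the joint laws is \(\frac{df_{1,\theta}}{df_{0,\theta}}(o)\), evaluated on the graph of \(T_\phi\).

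The next step is to decompose the same joint KL in the other order, through the chain rule with disintegration along \(Y\):
\[
\KL\big(f^{O,Y}_{1,\theta}\big\|f^{O,Y}_{0,\theta}\big)=\KL(P_{1,\theta,\phi}\|P_{0,\theta,\phi})+\E_{y\sim P_{1,\theta,\phi}}\Big[\KL\big(f_{1,\theta}(\cdot\mid Y=y)\,\big\|\,f_{0,\theta}(\cdot\mid Y=y)\big)\Big].
\]
Conditional KL divergences are nonnegative by Gibbs' inequality, so the expectation term is \(\ge 0\). Combining the two displays gives \(\KL(f_{1,\theta}\|f_{0,\theta})\ge\Phi_T(\theta,\phi)\).

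The reverse inequality \(\KL(f_{0,\theta}\|f_{1,\theta})\ge \KL(P_{0,\theta,\phi}\|P_{1,\theta,\phi})\) follows by the same argument with the roles of \(b=0\) and \(b=1\) swapped. Nothing in the argument used which hypothesis is in the first slot.

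The main technical point is the measure-theoretic bookkeeping: regular conditional distributions must exist, and the absolutely continuous case must be separated from the infinite case. For absolute continuity, if \(f_{1,\theta}\not\ll f_{0,\theta}\) the left side is \(+\infty\) and there is nothing to prove. If \(f_{1,\theta}\ll f_{0,\theta}\), then also \(P_{1,\theta,\phi}\ll P_{0,\theta,\phi}\), since \(P_{0,\theta,\phi}(B)=0\) implies \(f_{0,\theta}(T_\phi^{-1}B)=0\), hence \(f_{1,\theta}(T_\phi^{-1}B)=0\). For conditional distributions, we assume \(\mathcal O\) is standard Borel, which holds here because observations are finite tuples of indices and real scores.

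If one prefers to avoid disintegration altogether, there is an alternative route. Set \(L=\frac{df_{1,\theta}}{df_{0,\theta}}\). Then
\[
\frac{dP_{1,\theta,\phi}}{dP_{0,\theta,\phi}}\circ T_\phi=\E_{0}[L\mid \sigma(T_\phi)].
\]
Apply conditional Jensen's inequality to the convex function \(x\log x\) under \(f_{0,\theta}\) to obtain
\[
\E_0\big[\E_0[L\mid T_\phi]\log\E_0[L\mid T_\phi]\big]\le \E_0[L\log L].
\]
This is exactly the desired inequality.
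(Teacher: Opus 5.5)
Your proof is correct, and your primary argument differs from the paper's. The paper only cites Step~1 of the proof of Lemma~\ref{lem:supremum}. For a finite partition $\mathcal P$ it writes $\KL(T_{\#}f\|T_{\#}g)=\int \E_g[L\mid\mathcal P]\log\E_g[L\mid\mathcal P]\,dg$ via the explicit atom-average formula of Lemma~\ref{lem:CE-partition}, then applies conditional Jensen to $u\log u$. That is your fallback route, restricted to finite-range $T$.

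Your main route lifts to the joint law of $(O,T_\phi(O))$ and uses the chain rule. Your Radon--Nikodym check that this lift preserves KL is right. The route costs you regular conditional distributions, hence the standard Borel assumption and the chain rule in $[0,\infty]$. In return it gives the exact gap $\KL(f_{1,\theta}\|f_{0,\theta})-\Phi_T(\theta,\phi)=\E_{y\sim P_{1,\theta,\phi}}[\KL(f_{1,\theta}(\cdot\mid Y=y)\|f_{0,\theta}(\cdot\mid Y=y))]$. This identity characterizes equality (sufficiency of $T_\phi$, when the divergences are finite). It is also the general form of the quantity $\sum_{C}\int_C L(\log L-\log\E_g[L\mid C])\,dg$ that the paper bounds by hand in Step~2 of Lemma~\ref{lem:supremum}.

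Your fallback is the cleanest way to get the lemma in its stated generality, ``for any measurable $T_\phi$''. It rests on the identity $\frac{dP_{1,\theta,\phi}}{dP_{0,\theta,\phi}}\circ T_\phi=\E_0[L\mid\sigma(T_\phi)]$, justified by Doob--Dynkin factorization plus change of variables. By contrast, the paper's cited Step~1 handles only finite partitions. Reaching a general $T_\phi$ from it needs an extra step, for example applying Lemma~\ref{lem:supremum} on $\mathcal Y$ and composing finite partitions of $\mathcal Y$ with $T_\phi$. Conditional Jensen is legitimate in your fallback because $L\in L^1(f_{0,\theta})$ and $u\log u\ge -1/e$; the case $\KL(f_{1,\theta}\|f_{0,\theta})=\infty$ is trivial.
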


\begin{proof}
Classic DPI; see \emph{Step 1} of Lemma~\ref{lem:supremum}'s \emph{proof}.
\end{proof}

\begin{lemma}[Conditional expectation over finite partition]\label{lem:CE-partition}
Let \((\Omega,\mathcal F,g)\) be a probability space, let \(\mathcal P=\{A_1,\dots,A_m\}\) be a finite
measurable partition of \(\Omega\), and write \(\sigma(\mathcal P)\) for the \(\sigma\)-algebra it generates.
For any \(X\in L^1(g)\), define
\[
Y(x):=\sum_{i=1}^m \mathbf 1_{A_i}(x)\,
\begin{cases}
\displaystyle \frac{1}{g(A_i)}\int_{A_i} X\,dg, & g(A_i)>0,\\[1.5ex]
0, & g(A_i)=0.
\end{cases}
\]
Then:
\begin{enumerate}
\item \(Y\) is \(\sigma(\mathcal P)\)-measurable and, for every \(B\in\sigma(\mathcal P)\),
\(\displaystyle \int_B Y\,dg=\int_B X\,dg\). Hence \(Y=\E_g[X\mid \sigma(\mathcal P)]\) almost surely.
\item In particular, writing \(\E_g[X\mid\mathcal P]\) as shorthand for \(\E_g[X\mid\sigma(\mathcal P)]\),
\begin{equation}\label{eq:tower-partition}
\int \E_g[X\mid \mathcal P]\,dg \;=\; \int X\,dg .
\end{equation}
\end{enumerate}
\end{lemma}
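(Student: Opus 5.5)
The plan is to verify the two claims of Lemma~\ref{lem:CE-partition} directly from the definition of \(Y\), using only that \(\mathcal P\) is a finite partition. The structure of \(\sigma(\mathcal P)\) is explicit: it consists exactly of the finite unions \(\bigcup_{i\in I}A_i\) for \(I\subseteq\{1,\dots,m\}\). To see this, note that this collection contains every \(A_i\). It is also closed under complements, because the \(A_i\) partition \(\Omega\), and it is closed under countable unions, because there are only finitely many index sets. Conversely, any \(\sigma\)-algebra containing all \(A_i\) contains these unions.

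\textbf{Measurability.} \(Y\) is a finite linear combination \(\sum_i c_i\mathbf 1_{A_i}\) of indicators of sets in \(\sigma(\mathcal P)\). Its constants are
\[
c_i=\frac{1}{g(A_i)}\int_{A_i}X\,dg \quad\text{if } g(A_i)>0, \qquad c_i=0 \quad\text{otherwise}.
\]
Each \(c_i\) is finite because \(X\in L^1(g)\). Hence \(Y\) is \(\sigma(\mathcal P)\)-measurable and integrable.

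\textbf{Averaging identity.} Take \(B=\bigcup_{i\in I}A_i\). Since the \(A_i\) are disjoint, \(\int_B Y\,dg=\sum_{i\in I}c_i\,g(A_i)\). Split the sum into two cases:
\begin{itemize}
\item If \(g(A_i)>0\), then \(c_i\,g(A_i)=\int_{A_i}X\,dg\).
\item If \(g(A_i)=0\), then \(c_i\,g(A_i)=0=\int_{A_i}X\,dg\), because the integral of an integrable function over a null set vanishes.
\end{itemize}
Summing over \(i\in I\) and using disjointness again gives \(\int_B Y\,dg=\int_B X\,dg\).

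This is precisely the defining property of conditional expectation given \(\sigma(\mathcal P)\). By a.s.\ uniqueness of conditional expectation, \(Y=\E_g[X\mid\sigma(\mathcal P)]\) \(g\)-a.s. If one prefers not to cite uniqueness, it can be checked directly. Suppose \(Y'\) is another version. Then \(Y'\) is constant, say equal to \(c_i'\), on each \(A_i\). Applying the identity with \(B=A_i\) gives \(c_i'=c_i\) whenever \(g(A_i)>0\), so \(Y'\) and \(Y\) can differ only on a null set.

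\textbf{Tower identity.} Equation~\eqref{eq:tower-partition} is the special case \(B=\Omega=\bigcup_{i=1}^m A_i\), which lies in \(\sigma(\mathcal P)\). Changing \(Y\) on a null set does not change the integral, so the identity holds for any version of the conditional expectation.

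There is no real obstacle here. The only point needing care is the zero-mass atoms: the convention \(c_i=0\) must not break the identity, and it does not, because both sides vanish on null sets.
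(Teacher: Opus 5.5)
Your proposal is correct and follows the same route as the paper. Both arguments note that $Y$ is constant on the atoms, verify $\int_B Y\,dg=\int_B X\,dg$ for $B=\bigcup_{i\in I}A_i$ by summing over atoms, identify $Y$ via uniqueness of conditional expectation, and take $B=\Omega$ for part (2); your explicit handling of atoms with $g(A_i)=0$ (where the paper's display formally divides by $g(A_i)$) and your direct uniqueness check are small but welcome tightenings.
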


\begin{proof}
(1) By construction, \(Y\) is constant on each atom \(A_i\), thus \(\sigma(\mathcal P)\)-measurable.  
If \(B=\bigcup_{i\in I} A_i\in\sigma(\mathcal P)\), then
\begin{align*}
    \int_B Y\,dg 
    &= \sum_{i\in I}\frac{1}{g(A_i)}\!\left(\int_{A_i} X\,dg\right) g(A_i)\\
    &= \sum_{i\in I}\int_{A_i} X\,dg = \int_B X\,dg,
\end{align*}

which is precisely the defining property of the conditional expectation
\(\E_g[X\mid \sigma(\mathcal P)]\). Uniqueness up to \(g\)-null sets yields \(Y=\E_g[X\mid \sigma(\mathcal P)]\) a.s.

(2) Take \(B=\Omega\) in the identity of part (1) to obtain \eqref{eq:tower-partition}.
\end{proof}

\begin{lemma}[Partition supremum]\label{lem:supremum}
For any pair of laws \(f,g\) on \(\mathcal{O}\),
\[
\KL(f\|g)=\sup_{T}\ \KL(T_{\#}f\|T_{\#}g),
\]
where the supremum is over all finite measurable partitions (equivalently, finite-range measurable maps \(T\)); moreover, for any \(\delta>0\) there exists a finite partition \(T_\delta\) such that
\[
\KL(f\|g)\le \KL(T_{\delta\#}f\|T_{\delta\#}g)+\delta.
\]
\end{lemma}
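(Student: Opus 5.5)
Our plan is to prove the two halves of the statement separately: first the inequality $\KL(f\|g)\ge\KL(T_{\#}f\|T_{\#}g)$ for every finite measurable partition $T$, which is the data processing inequality claimed in Lemma~\ref{lem:DPI}, and then the approximation from below, which yields both the supremum identity and the $\delta$-version.

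\emph{Step 1 (DPI for finite partitions).} Assume $f\ll g$; otherwise $\KL(f\|g)=\infty$ and only the approximation part needs an argument, postponed to Step 2. Write $r=df/dg$ and let $\mathcal P=\{A_1,\dots,A_m\}$ be the atoms of $T$. By Lemma~\ref{lem:CE-partition}, $\E_g[r\mid\mathcal P]$ equals $f(A_i)/g(A_i)$ on each atom $A_i$ with $g(A_i)>0$. Atoms with $g(A_i)=0$ also have $f(A_i)=0$, so they can be ignored. The function $\varphi(x)=x\log x$ is convex, so the conditional Jensen inequality gives
\[
\E_g[\varphi(r)\mid\mathcal P]\ge\varphi(\E_g[r\mid\mathcal P]).
\]
Integrating against $g$ and using \eqref{eq:tower-partition}, the left-hand side becomes $\int\varphi(r)\,dg=\KL(f\|g)$. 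The right-hand side becomes
\[
\sum_i g(A_i)\,\frac{f(A_i)}{g(A_i)}\log\frac{f(A_i)}{g(A_i)}=\KL(T_{\#}f\|T_{\#}g).
\]
Measurable maps with infinite range can be handled by the same argument with a general sub-$\sigma$-algebra, but the statement only needs finite ranges.

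\emph{Step 2 (approximation).} Given $\delta>0$, we will construct a partition $T_\delta$ with $\KL(T_{\delta\#}f\|T_{\delta\#}g)\ge\KL(f\|g)-\delta$, or exceeding any prescribed $M$ when the divergence is infinite.

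If $f\not\ll g$, pick a set $A$ with $g(A)=0<f(A)$. The two-cell partition $\{A,A^c\}$ already gives divergence $+\infty$.

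If $f\ll g$, quantize the log-ratio $\ell=\log r$ into level sets. Take $A_k=\{k\epsilon\le \ell<(k+1)\epsilon\}$ for $|k|\le K$, together with two tail cells $\{\ell<-K\epsilon\}$ and $\{\ell\ge(K+1)\epsilon\}$. On each bounded cell, $\log(f(A_k)/g(A_k))\ge k\epsilon\ge\ell-\epsilon$ pointwise. Hence the contribution of the bounded cells is at least $\int_{\cup A_k}\ell\,df-\epsilon$.

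For the tail cells, write $F=f(\text{tail})$ and $G=g(\text{tail})$. The log-sum inequality shows the partition term $F\log(F/G)$ is bounded below by a quantity that tends to $0$ as $K\to\infty$, because $F\to0$ and $F\log(F/G)\ge F\log F\to 0$. Meanwhile $\int_{\text{tail}}\ell\,df\to0$ by integrability of $\ell^{+}$ and $\ell^{-}$ under $f$. The integrability of $\ell^{-}$ holds since $\int r\log^{-}r\,dg\le 1/e$.

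When $\KL(f\|g)=\infty$, apply monotone convergence to the truncated upper part in the same way. This shows the partition values are unbounded.

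\emph{Main obstacle.} We expect the hardest part to be the careful handling of the tail cells and the infinite-divergence case in Step 2, that is, verifying that the lower tail $\{\ell<-K\epsilon\}$ contributes a vanishing negative amount. We will first try the bound $F\log(F/G)\ge F\log F$, which holds since $G\le 1$, and if this proves too crude, we will merge the lower tail into a neighbouring bounded cell instead.
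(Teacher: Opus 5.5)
Your proposal is correct and follows essentially the same route as the paper. The DPI half goes by conditional Jensen for $x\log x$ on the atoms, and the approximation half quantizes the likelihood ratio into level sets plus two tail cells; the paper fixes the tail cutoff quantitatively in advance, whereas you let $K\to\infty$.

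Your version is, if anything, tighter. The bound $F\log(F/G)\ge F\log F$ is the right control of the lower tail, whereas the paper's step $|\log\E_g[L\mid A_-]|\le -M$ has the inequality reversed. You also treat the case $f\ll g$ with $\KL(f\|g)=\infty$, which the paper's Step 2 excludes. In that case the $\delta$-clause cannot hold literally, since every finite partition then yields a finite divergence, so your reformulation as unboundedness of the partition values is the correct one.
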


\begin{proof}
If \(f\not\ll g\), there exists \(A\in\mathcal F\) with \(g(A)=0\) and \(f(A)>0\); for the two-atom
partition \(\{A,A^c\}\) one has \(\KL(T_{\#}f\|T_{\#}g)=+\infty=\KL(f\|g)\), and the statement is trivial.
Hence assume \(f\ll g\). Let \(L:=\frac{df}{dg}\) and \(\ell:=\log L\),  which are well-defined with Radon-Nikodym theorem. Then
\(\KL(f\|g)=\int L\log L\,dg=\int \ell\,df\in[0,\infty]\).

\vskip 1em
\noindent \emph{Step 1 (DPI Inequality).}

\noindent Let \(\mathcal P=\{A_i\}_{i=1}^m\) be a finite partition and let \(T\) be its index map.
Write \(p_i=f(A_i)=\int_{A_i}L\,dg\) and \(q_i=g(A_i)\) (with the convention \(0\log 0:=0\)).
Then
\begin{align*}
    \KL(T_{\#}f\|T_{\#}g)
    &=\sum_{i=1}^m p_i\log\!\frac{p_i}{q_i}\\
    &=\sum_{i=1}^m \Big(\int_{A_i}L\,dg\Big)\log
    \frac{\int_{A_i}L\,dg}{g(A_i)}\\
    &=\int \, \E_g[L\mid\mathcal P]\;\log \E_g[L\mid\mathcal P]\; dg,
\end{align*}
where \(\E_g[\cdot\mid\mathcal P]\) is conditional expectation under \(g\) onto the
\(\sigma\)-algebra generated by \(\mathcal P\).
Since \(\varphi(u):=u\log u\) is convex on \((0,\infty)\), Jensen yields
\(\varphi(\E_g[L\mid\mathcal P])\le \E_g[\varphi(L)\mid\mathcal P]\), and with Lemma~\ref{lem:CE-partition} integrating gives
\begin{align*}
    \KL(T_{\#}f\|T_{\#}g)
    &=\int \varphi(\E_g[L\mid\mathcal P])\,dg\\
    &\;\le\; \int \E_g[\varphi(L)\mid\mathcal P]\,dg\\
    &=\int \varphi(L)\,dg=\KL(f\|g).
\end{align*}

\noindent Then taking the supremum over all finite \(\mathcal P\) shows
\(\sup_T \KL(T_{\#}f\|T_{\#}g)\le \KL(f\|g)\).

\vskip 1em
\noindent \emph{Step 2 (Limitation of divergence gap with a finite partition).}

\noindent Fix \(\delta>0\) and assume \(\KL(f\|g)<\infty\).
We construct a finite partition \(\mathcal P_{\delta}\) for which
\(\KL(f\|g)-\KL(T_{\delta\#}f\|T_{\delta\#}g)\le\delta\).

\smallskip
\noindent \emph{(2.a) Tail control.}

Choose \(M\ge 1\) so large that the ``upper tail'' contribution satisfies
\[
\int_{\{L\ge e^{M}\}} L\log L\,dg \;\le\; \delta/3,
\]
and additionally \(2Me^{-M}\le \delta/3\) (possible since \(Me^{-M}\to 0\) as \(M\to\infty\)).
Define three regions
\[
A_-:=\{L<e^{-M}\},
B:=\{e^{-M}\le L<e^{M}\},
A_+:=\{L\ge e^{M}\}.
\]

\smallskip
\noindent \emph{(2.b) Middle quantization.}

Pick a mesh size \(\eta\in(0,1)\) to be specified (below we take \(\eta:=\delta/3\)).
Partition the middle region \(B\) into finitely many level sets of \(L\):
for \(k=0,1,\dots,K-1\) with \(K:=\lceil 2M/\eta\rceil\), set
\[
B_k:=\big\{\, x\in\mathcal O:\ e^{-M+k\eta}\le L(x)< e^{-M+(k+1)\eta}\,\big\}.
\]
Then on each \(B_k\) we have \( \log L \in [a_k,a_k+\eta]\) with \(a_k:=-M+k\eta\), and also

\begin{align*}
&\log \E_g[L\mid B_k]=\log\frac{\int_{B_k}L\,dg}{g(B_k)}\in [a_k,a_k+\eta]\\
&\quad \Rightarrow
\big|\log L - \log \E_g[L\mid B_k]\big|\le \eta\ \text{ on }B_k.
\end{align*}

\smallskip
\noindent \emph{(2.c) The finite partition.}

Let
\(\mathcal P_\delta:=\{A_-,\,B_0,\dots,B_{K-1},\,A_+\}\)
and let \(T_\delta\) be its index map.
Using the identity from Step~1 and writing the gap as an \(L\,dg\)-integral, we have
\begin{align*}
    \KL(f\|g)-\KL(T_{\delta\#}f\|T_{\delta\#}g)
    =\sum_{C\in\mathcal P_\delta} \int_{C} L\Big(\log L-\log \E_g[L\mid C]\Big)\,dg
    =: \Delta_-\ +\ \Delta_B\ +\ \Delta_+ .
\end{align*}

\smallskip
\noindent \emph{(2.d) Bounding the middle gap.}

On each \(B_k\), the pointwise bound \(|\log L-\log \E_g[L\mid B_k]|\le \eta\) yields
\begin{align*}
    \Delta_B
    =\sum_{k=0}^{K-1} \int_{B_k} L\big(\log L-\log \E_g[L\mid B_k]\big)\,dg
    \;\le\; \eta \sum_{k=0}^{K-1}\int_{B_k} L\,dg
    \;=\; \eta\, f(B)
    \;\le\; \eta.
\end{align*}

\smallskip
\noindent \emph{(2.e) Bounding the lower-tail gap.}

On \(A_-\) one has \(L\le e^{-M}\).
Using \(|\log L-\log \E_g[L\mid A_-]|\le |\log L|+|\log \E_g[L\mid A_-]|\) and
\(\E_g[L\mid A_-]\le e^{-M}\Rightarrow |\log \E_g[L\mid A_-]|\le -M\), we obtain
\begin{align*}
    \Delta_-
\le \int_{A_-} L|\log L|\,dg + M\!\int_{A_-} L\,dg
\le M e^{-M} + M e^{-M}
= 2Me^{-M}
\;\le\; \delta/3,
\end{align*}
by the choice of \(M\).

\smallskip
\noindent \emph{(2.f) Bounding the upper-tail gap.}
On \(A_+\) one has \(\E_g[L\mid A_+]\ge e^{M}\), so
\(\log L-\log \E_g[L\mid A_+]\le \log L - M\).
Hence
\begin{align*}
    0\le \Delta_+
=\int_{A_+} L\big(\log L-\log \E_g[L\mid A_+]\big)\,dg
\le \int_{A_+} L(\log L-M)\,dg
\le \int_{A_+} L\log L\,dg
\;\le\; \delta/3,
\end{align*}
by the choice of \(M\).

\smallskip
\noindent \emph{(2.g) Total control.}

Set \(\eta:=\delta/3\).
Collecting the bounds from (2.d)--(2.f) gives
\[
\KL(f\|g)-\KL(T_{\delta\#}f\|T_{\delta\#}g)
\le \delta/3 + \delta/3 + \delta/3=\delta.
\]

\vskip 1em
\noindent \emph{Step 3 (Taking the supremum).}

\noindent By Step~1, \(\KL(T_{\#}f\|T_{\#}g)\le \KL(f\|g)\) for every finite partition.
By Step~2, for each \(\delta>0\) there exists a finite partition \(T_\delta\) such that
\(\KL(f\|g)\le \KL(T_{\delta\#}f\|T_{\delta\#}g)+\delta\).
Therefore \(\sup_T \KL(T_{\#}f\|T_{\#}g)\ge \KL(f\|g)-\delta\) for all \(\delta>0\),
hence \(\sup_T \KL(T_{\#}f\|T_{\#}g)=\KL(f\|g)\).
\end{proof}

\begin{lemma}[Cross-entropy dominates the Bayes risk]
\label{lem:Cross-entropy dominates}
Let $J\in\{1,\dots,K\}$ and $\mathbf Y=(Y_1,\ldots,Y_K)$ be generated as in the
$1$-positive $(K{-}1)$-negative scheme, and let $\pi^*(j\mid\mathbf Y)$ denote the
true posterior of $J$ given $\mathbf Y$. For any measurable score
$h:\mathcal Y\to\mathbb R$, define the model posterior
$\pi_h(j\mid\mathbf Y):=\frac{e^{h(Y_j)}}{\sum_{i=1}^K e^{h(Y_i)}}$ and the
$K$-sample InfoNCE loss
\[
\mathcal L_{\mathrm{NCE},K}(h):=\E\!\left[-\log \pi_h(J\mid\mathbf Y)\right].
\]
Then
\begin{equation}\label{eq:ce-bayes}
\begin{aligned}
    \mathcal L_{\mathrm{NCE},K}(h)\ \ge\ 
    \E_{\mathbf Y}\!\big[H(\pi^*(\cdot\mid\mathbf Y))\big]
    \quad\Longleftrightarrow\quad
    -\mathcal L_{\mathrm{NCE},K}(h)\ \le\ -\mathcal L^*,
\end{aligned}
\end{equation}
where $H(p):=-\sum_j p(j)\log p(j)$ and
$\mathcal L^*:=\E_{\mathbf Y}\big[H(\pi^*(\cdot\mid\mathbf Y))\big]$ is the Bayes risk when taking logarithmic loss. Equality holds iff $\pi_h(\cdot\mid\mathbf Y)=\pi^*(\cdot\mid\mathbf Y)$ a.s.
\end{lemma}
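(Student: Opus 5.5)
The plan is to condition on $\mathbf Y$ and reduce the claim to the nonnegativity of a KL divergence between two distributions on the finite index set $\{1,\dots,K\}$.

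\textbf{Step 1 (conditioning).} By the tower property, write
\[
\mathcal L_{\mathrm{NCE},K}(h)=\E_{\mathbf Y}\Big[\E\big[-\log\pi_h(J\mid\mathbf Y)\,\big|\,\mathbf Y\big]\Big].
\]
The conditional law of $J$ given $\mathbf Y$ is $\pi^*(\cdot\mid\mathbf Y)$. Hence the inner expectation equals the cross-entropy
\[
\sum_{j=1}^K \pi^*(j\mid\mathbf Y)\bigl(-\log\pi_h(j\mid\mathbf Y)\bigr).
\]

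\textbf{Step 2 (decomposition).} Add and subtract $\log\pi^*(j\mid\mathbf Y)$ inside the sum. The inner expectation then splits as
\[
H\bigl(\pi^*(\cdot\mid\mathbf Y)\bigr)+\KL\bigl(\pi^*(\cdot\mid\mathbf Y)\,\|\,\pi_h(\cdot\mid\mathbf Y)\bigr).
\]
Two facts make this legitimate:
\begin{itemize}
\item $\pi_h(j\mid\mathbf Y)>0$ for every $j$, because it is a softmax of real scores. So $-\log\pi_h$ is finite, and the KL term is well defined with the convention $0\log0=0$.
\item The KL term is nonnegative, by Gibbs' inequality (Jensen applied to $-\log$ on a finite set). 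This is the same convexity argument as Step~1 of Lemma~\ref{lem:supremum}, specialised to a discrete space.
\end{itemize}
Taking expectations over $\mathbf Y$ gives
\[
\mathcal L_{\mathrm{NCE},K}(h)=\mathcal L^*+\E_{\mathbf Y}\Big[\KL\bigl(\pi^*(\cdot\mid\mathbf Y)\,\|\,\pi_h(\cdot\mid\mathbf Y)\bigr)\Big]\ge\mathcal L^*.
\]
Multiplying by $-1$ gives the equivalent form in \eqref{eq:ce-bayes}.

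\textbf{Step 3 (equality case).} Suppose equality holds. Then the expected KL vanishes, so the KL is $0$ for almost every $\mathbf Y$. On a finite set, KL is zero exactly when the two distributions coincide, so $\pi_h=\pi^*$ almost surely. The converse direction is immediate from the decomposition.

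\textbf{Main obstacle: integrability.} The only care needed is to ensure the decomposition is not of the form $\infty-\infty$.
\begin{itemize}
\item The entropy satisfies $0\le H(\pi^*)\le\log K$, so $\mathcal L^*$ is finite.
\item Both $-\log\pi_h\ge0$ and the KL term are nonnegative.
\end{itemize}
So the identity holds in $[0,\infty]$ by monotone integration, with no need for $h$ to be integrable. If $\mathcal L_{\mathrm{NCE},K}(h)=\infty$, the inequality holds trivially. Finally, $\pi^*$ is a genuine regular conditional distribution because $J$ takes finitely many values; by Bayes' rule it is proportional to $f_1(Y_j)\prod_{i\ne j}f_0(Y_i)$.
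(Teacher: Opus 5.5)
Your proposal is correct and follows essentially the same route as the paper: for fixed $\mathbf Y$, write the conditional expected loss as the cross-entropy $H(\pi^*,\pi_h)=H(\pi^*)+\KL(\pi^*\|\pi_h)$, apply Gibbs' inequality, and average over $\mathbf Y$. Your extra bookkeeping makes explicit what the paper leaves implicit: you work in $[0,\infty]$ using $-\log\pi_h\ge 0$ and $H(\pi^*)\le\log K$, and you derive the a.s.\ equality case from the vanishing of the expected KL.
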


\begin{proof}
For fixed $\mathbf Y$, by the standard decomposition $H(p,q)=H(p)+\KL(p\|q)$, the cross-entropy between $\pi^*$ and $\pi_h$ is
\begin{align*}
    H(\pi^*,\pi_h)
    =\E_{J\sim \pi^*(\cdot\mid\mathbf Y)}[-\log \pi_h(J\mid\mathbf Y)]
    =H(\pi^*)+\KL\!\big(\pi^*(\cdot\mid\mathbf Y)\,\|\,\pi_h(\cdot\mid\mathbf Y)\big).
\end{align*}
Since
$\KL(\cdot\|\cdot)\ge 0$ (Gibbs' inequality~\cite{cover2012infotheory}), we have
$H(\pi^*,\pi_h)\ge H(\pi^*)$ with equality iff $\pi_h=\pi^*$.
Taking expectation over $\mathbf Y$ gives
\[
\mathcal L_{\mathrm{NCE},K}(h)
=\E_{\mathbf Y}\big[H(\pi^*,\pi_h)\big]
\ge \E_{\mathbf Y}\big[H(\pi^*)\big]=\mathcal L^*,
\]
which is \eqref{eq:ce-bayes}. 
\end{proof}

\begin{lemma}[Conditional product density under a uniform index]\label{lem:cond-product-density}
Let $P,Q$ be probability laws on $(\mathcal Y,\mathcal G)$ with $P\ll Q$, and set
$r:=\frac{dP}{dQ}$. Fix $K\ge2$. Draw $J\sim\mathrm{Unif}\{1,\dots,K\}$ and, given $J=j$,
sample $\mathbf Y=(Y_1,\ldots,Y_K)$ with independent coordinates
$Y_j\sim P$ and $Y_i\sim Q$ for $i\neq j$. Then:
\begin{enumerate}
\item For each $j$, the conditional law $\mathbb P(\mathbf Y\mid J=j)$ is absolutely continuous
w.r.t.\ $Q^{\otimes K}$ with Radon--Nikodym derivative
\[
\frac{d\,\mathbb P(\mathbf Y\mid J=j)}{d\,Q^{\otimes K}}(\mathbf y)\;=\; r(y_j)
\qquad(Q^{\otimes K}\text{-a.e.}).
\]
\item The marginal law of $\mathbf Y$ satisfies
\[
\frac{d\,\mathbb P_{\mathbf Y}}{d\,Q^{\otimes K}}(\mathbf y)\;=\;\frac1K\sum_{i=1}^K r(y_i)
\qquad(Q^{\otimes K}\text{-a.e.}).
\]
\end{enumerate}
\end{lemma}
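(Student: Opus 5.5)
The plan is to verify both claims directly from the definition of the Radon--Nikodym derivative, by checking the integrals of the candidate densities against measurable rectangles and then extending to the whole product $\sigma$-algebra. The extension step is a $\pi$--$\lambda$ (or monotone class) argument. The setup is elementary: it combines independence with Fubini/Tonelli. The only real content is keeping track of null sets and of the fact that $r$ is nonnegative and $Q$-integrable with $\int r\,dQ=1$.

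\emph{Part 1.} Fix $j$. Given $J=j$, the coordinates are independent, so $\mathbb P(\mathbf Y\mid J=j)=Q\otimes\cdots\otimes P\otimes\cdots\otimes Q$, with $P$ in slot $j$. Take a rectangle $A_1\times\cdots\times A_K$ with $A_i\in\mathcal G$. Its conditional probability is $P(A_j)\prod_{i\neq j}Q(A_i)$. Since $P(A_j)=\int_{A_j} r\,dQ$, Tonelli on the product space gives
\[
\int_{A_1\times\cdots\times A_K} r(y_j)\,dQ^{\otimes K}(\mathbf y)=\Big(\int_{A_j} r\,dQ\Big)\prod_{i\neq j}Q(A_i).
\]
So the measure $\nu_j(C):=\int_C r(y_j)\,dQ^{\otimes K}$ and the conditional law agree on all rectangles. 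Rectangles form a $\pi$-system that generates $\mathcal G^{\otimes K}$. Both set functions are probability measures, since taking every $A_i=\mathcal Y$ gives total mass $\int r\,dQ=1$. By Dynkin's $\pi$--$\lambda$ theorem they coincide on $\mathcal G^{\otimes K}$. Absolute continuity then follows, and $r(y_j)$ is the density. Uniqueness of Radon--Nikodym derivatives gives the stated $Q^{\otimes K}$-a.e. identification. Measurability of $\mathbf y\mapsto r(y_j)$ holds because it is the composition of $r$ with a coordinate projection.

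\emph{Part 2.} Use the law of total probability over $J$:
\[
\mathbb P_{\mathbf Y}(C)=\sum_{j=1}^K \tfrac1K\,\mathbb P(\mathbf Y\in C\mid J=j)
\]
for every $C\in\mathcal G^{\otimes K}$. Substituting Part 1 gives $\sum_j \tfrac1K\int_C r(y_j)\,dQ^{\otimes K}$. Linearity of the integral turns this into $\int_C \tfrac1K\sum_i r(y_i)\,dQ^{\otimes K}$. This shows $\mathbb P_{\mathbf Y}\ll Q^{\otimes K}$ with the claimed density, again up to $Q^{\otimes K}$-null sets.

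\emph{Main obstacle.} Nothing here is deep. The one step needing care is the extension from rectangles to the full product $\sigma$-algebra in Part 1. I would handle it with the $\pi$--$\lambda$ theorem as above, which works because both measures are finite with equal total mass. I would also remark that this lemma is what later lets the true posterior in Lemma~\ref{lem:Cross-entropy dominates} be written as $\pi^*(j\mid\mathbf Y)=r(Y_j)/\sum_i r(Y_i)$ via Bayes' rule.
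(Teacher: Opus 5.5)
Your proof is correct and follows the paper's route: you identify the conditional law as $Q^{\otimes(j-1)}\otimes P\otimes Q^{\otimes(K-j)}$ with density $r(y_j)$, then average over the uniform $J$ to get the marginal density. The only difference is that you justify the product-density step via rectangles, Tonelli and the $\pi$--$\lambda$ theorem, whereas the paper simply asserts it (phrased with bounded test functions rather than sets).
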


\begin{proof}
We use the test-function characterization of Radon--Nikodym derivatives.

\smallskip
\noindent \emph{(1) Conditional law.}
Fix $j\in\{1,\dots,K\}$ and any bounded measurable $\varphi:\mathcal Y^K\to\mathbb R$.
By the construction of $\mathbf Y$ given $J=j$ and independence of coordinates,
\[
\int \varphi(\mathbf y)\, d\mathbb P(\mathbf Y\mid J=j)
= \int \varphi(\mathbf y)\, d\big(Q^{\otimes (j-1)}\otimes P\otimes Q^{\otimes (K-j)}\big)(\mathbf y).
\]
Since $P\ll Q$ with density $r=\frac{dP}{dQ}$ and $Q\ll Q$ with density $1$, we have for
$Q^{\otimes K}$ almost everywhere on\ $\mathbf y$,
\[
\frac{d\big(Q^{\otimes (j-1)}\otimes P\otimes Q^{\otimes (K-j)}\big)}{d(Q^{\otimes K})}(\mathbf y)
= r(y_j)\cdot \prod_{i\ne j}1
= r(y_j),
\]
and therefore
\[
\int \varphi(\mathbf y)\, d\mathbb P(\mathbf Y\mid J=j)
= \int \varphi(\mathbf y)\, r(y_j)\, d(Q^{\otimes K})(\mathbf y).
\]
By uniqueness in the Radon--Nikodym theorem, this identifies
$\frac{d\,\mathbb P(\mathbf Y\mid J=j)}{d\,Q^{\otimes K}}=r(y_j)$ almost everywhere.

\smallskip
\noindent \emph{(2) Marginal law.}
Averaging over the uniform $J$,
\begin{align*}
    \int \varphi(\mathbf y)\, d\mathbb P_{\mathbf Y}
= \sum_{j=1}^K \mathbf{1}(J=j)\int \varphi(\mathbf y)\, d\mathbb P(\mathbf Y\mid J=j)
= \frac1K\sum_{j=1}^K \int \varphi(\mathbf y)\, r(y_j)\, d(Q^{\otimes K})(\mathbf y).
\end{align*}
Hence, again by Radon--Nikodym uniqueness,
\(
\dfrac{d\,\mathbb P_{\mathbf Y}}{d\,Q^{\otimes K}}(\mathbf y)
= \dfrac1K\sum_{j=1}^K r(y_j)
\)
$Q^{\otimes K}$.
\end{proof}

\begin{lemma}[ASN Approximation for SPRT with Markov Observations]
\label{lem:ASN_Markov}
Consider the SPRT with boundaries \(A<0<B\) and error budgets \((\alpha,\beta)\). 
Let the observation sequence \(\{O_t\}_{t=1}^\infty\) be a stationary and ergodic Markov chain under each hypothesis, with transition densities \(f_{1,\theta}\) and \(f_{0,\theta}\).  
Let \(\ell_\theta(O_t|O_{t-1}) = \log \frac{f_{1,\theta}(O_t|O_{t-1})}{f_{0,\theta}(O_t|O_{t-1})}\) be the per‑step log‑likelihood ratio.
Define the drift rates under \(H_1\) and \(H_0\) respectively as
\[
\mu_1(\theta) = \mathbb{E}_{1,\theta}[\ell_\theta(O_t|O_{t-1})], \qquad
\mu_0(\theta) = \mathbb{E}_{0,\theta}[\ell_\theta(O_t|O_{t-1})].
\]
If the overshoot at stopping is negligible, then the expected stopping times (ASN) are approximately
\begin{equation}\label{eq:ASN-exp}
    \mathbb{E}_{1}[N] \approx \frac{|\log B|}{\mu_1(\theta)}, \qquad 
    \mathbb{E}_{0}[N] \approx \frac{|\log A|}{|\mu_0(\theta)|}. 
\end{equation}
\end{lemma}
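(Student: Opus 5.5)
The plan is to reduce the Markov case to the classical i.i.d. Wald argument. The two tools are Wald's identity for martingale-type sums and the boundary approximation at the stopping time. Write $S_n=\sum_{t=1}^n \ell_\theta(O_t\mid O_{t-1})$, and under $H_b$ decompose
\[
\ell_\theta(O_t\mid O_{t-1}) = \mu_b(\theta) + \xi_t + \big(\psi_b(O_t)-\psi_b(O_{t-1})\big).
\]
Here $\psi_b$ solves the Poisson equation $\psi_b - P_b\psi_b = \bar\ell_b - \mu_b(\theta)$ for the chain's transition kernel $P_b$, where $\bar\ell_b(x)=\mathbb{E}_{b}[\ell_\theta(O_{t+1}\mid O_t)\mid O_t=x]$. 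The terms $\xi_t$ are martingale differences with respect to $\mathcal F_t=\sigma(O_1,\dots,O_t)$. Stationarity and ergodicity give existence of $\psi_b$ under standard conditions (e.g.\ geometric ergodicity and integrable LLR), which I will assume as part of the hypotheses.

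\textbf{Step 1 (finite stopping time).} Show $N<\infty$ a.s.\ and $\mathbb{E}_b[N]<\infty$. By the ergodic theorem, $S_n/n\to\mu_b(\theta)$ a.s. The drift is nonzero by Gibbs' inequality unless the two kernels coincide: $\mu_1>0$ and $\mu_0<0$. Hence $S_n$ leaves $(\log B,\log A)$ a.s. Finiteness of the expectation follows from a standard geometric-trial argument: there is $m$ such that, from any state, $P(|S_{n+m}-S_n|>\log A-\log B)\ge c>0$.

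\textbf{Step 2 (Wald identity).} Apply optional stopping to the martingale $\sum_{t\le n}\xi_t$ at $N$, justified by $\mathbb{E}_b[N]<\infty$ and bounded conditional second moments. This gives
\[
\mathbb{E}_b[S_N]=\mu_b(\theta)\,\mathbb{E}_b[N]+\mathbb{E}_b[\psi_b(O_N)-\psi_b(O_0)].
\]
The last term is $O(1)$ and independent of the boundaries, so it is absorbed into the same ``negligible'' correction as the overshoot.

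\textbf{Step 3 (boundary approximation).} With negligible overshoot, $S_N\in\{\log A,\log B\}$, which gives
\[
\mathbb{E}_1[S_N]\approx(1-\beta)\log A+\beta\log B, \qquad \mathbb{E}_0[S_N]\approx \alpha\log A+(1-\alpha)\log B.
\]
Dividing by $\mu_b$ yields Wald's ASN formulas. In the regime of small budgets, where the dominant exit boundary term controls, these reduce to the stated expressions \eqref{eq:ASN-exp}.

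I will match the stated pairing of boundaries and hypotheses as written, since the statement's conventions ($A<0<B$) are swapped relative to \cref{eq:sprt_boundry}. I will note explicitly that the relevant boundary is the one the walk crosses under each hypothesis, and take absolute values throughout.

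The main obstacle is Step 2. The Markov dependence breaks the plain i.i.d.\ Wald identity, and the Poisson-equation correction must be controlled uniformly in the boundary. I will first try assuming a bounded $\psi_b$, which holds under uniform ergodicity. If that is too strong, I will fall back to a Lyapunov-function bound on $\mathbb{E}_b|\psi_b(O_N)|$.
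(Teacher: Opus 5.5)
Your argument is correct in substance, but it takes a genuinely different and more rigorous route than the paper. The paper builds no martingale. It invokes the strong law for ergodic Markov chains, $S_n/n\to\mu_b(\theta)$ a.s., and reasons that a walk with linear drift reaches the boundary it drifts toward after roughly $\log B/\mu_1(\theta)$ (resp.\ $\log A/\mu_0(\theta)$) steps. It then simply ``takes expectations on both sides.'' That uses nothing beyond stationarity and ergodicity, but it is only a first-order heuristic: it never shows $\mathbb{E}_b[N]<\infty$, never justifies passing from an almost-sure statement to an expectation, and ignores exits through the wrong boundary.

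Your Poisson-equation decomposition with optional stopping instead gives an exact identity $\mathbb{E}_b[S_N]=\mu_b(\theta)\,\mathbb{E}_b[N]+\mathbb{E}_b[\psi_b(O_0)-\psi_b(O_N)]$, with a boundary-independent bound on the last term. Every discarded quantity (overshoot, the coboundary term, the wrong-boundary weights) is therefore explicit, and you pass through Wald's refined ASN formula before specializing. That refinement matters at the paper's own operating point $\alpha=\beta=0.1$. There $(1-\beta)\log\frac{1-\beta}{\alpha}+\beta\log\frac{\beta}{1-\alpha}=0.8\log 9$, so the stated expression exceeds Wald's approximation by a factor $1.25$; your route makes this visible and the paper's hides it. The price is extra hypotheses (a bounded or Lyapunov-controlled Poisson solution $\psi_b$, moment conditions for optional stopping), which you rightly state explicitly.

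A few small repairs:
\begin{itemize}
\item With $\psi_b-P_b\psi_b=\bar\ell_b-\mu_b(\theta)$, the coboundary is $\psi_b(O_{t-1})-\psi_b(O_t)$. The correction term in Step~2 therefore has the opposite sign, which is harmless since you drop it.
\item Your Step~3 display and the width $\log A-\log B$ in Step~1 use the main-text labeling, with $A$ as the upper boundary. In the lemma's labeling, $\mathbb{E}_1[S_N]\approx(1-\beta)\log B+\beta\log A$ and $\mathbb{E}_0[S_N]\approx\alpha\log B+(1-\alpha)\log A$. Write everything in one convention.
\item The uniform-in-state bound in Step~1 is easiest to get from your own decomposition. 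With bounded $\psi_b$ and bounded conditional second moments, Chebyshev applied to the martingale increment over $m$ steps gives it for large $m$. Under only a Lyapunov bound on $\psi_b$ (your fallback), this step needs extra work, since geometric ergodicity alone does not give uniformity over starting states.
\end{itemize}
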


\begin{proof}
We give a detailed proof under the assumption of stationary, ergodic Markov observations.

Let \(\{O_t\}_{t\ge 0}\) be a Markov chain on a state space \(\mathcal{O}\). 
Under hypothesis \(H_i\) (\(i=0,1\)), the chain has transition density \(f_{i,\theta}(\cdot|O_{t-1})\) and a unique stationary distribution \(\pi_i\).  
We assume the chain starts from its stationary distribution (or an arbitrary initial distribution; by ergodicity the long‑run behavior is the same).  
The per‑step log‑likelihood ratio is
\[
Z_t \equiv \ell_\theta(O_t|O_{t-1}) = \log \frac{f_{1,\theta}(O_t|O_{t-1})}{f_{0,\theta}(O_t|O_{t-1})},
\]
and the cumulative sum is \(S_n = \sum_{t=1}^n Z_t\).  
The stopping time is
\[
N = \inf\{ n \ge 1 : S_n \ge \log B \text{ or } S_n \le \log A \},
\]
where the boundaries satisfy \(A = \frac{\beta}{1-\alpha}\) and \(B = \frac{1-\beta}{\alpha}\) for given error probabilities \(\alpha,\beta\) (Wald’s approximations).

By stationarity,
\[
\mu_1(\theta) = \mathbb{E}_{1,\theta}[Z_t] 
= \iint \pi_1(dx) f_{1,\theta}(dy|x) \log\frac{f_{1,\theta}(dy|x)}{f_{0,\theta}(dy|x)}
= \mathrm{KL}\big(f_{1,\theta}(\cdot|O)\|f_{0,\theta}(\cdot|O)\big),
\]
where the last equality denotes the average Kullback–Leibler divergence under the stationary distribution \(\pi_1\).  
Similarly,
\[
\mu_0(\theta) = \mathbb{E}_{0,\theta}[Z_t] 
= -\mathrm{KL}\big(f_{0,\theta}(\cdot|O)\|f_{1,\theta}(\cdot|O)\big) < 0.
\]

Since the chain is ergodic under each hypothesis, the strong law of large numbers for Markov chains gives
\[
\frac{S_n}{n} \xrightarrow[n\to\infty]{\text{a.s.}} \mu_1(\theta) \quad \text{under } H_1,
\]
and
\[
\frac{S_n}{n} \xrightarrow[n\to\infty]{\text{a.s.}} \mu_0(\theta) \quad \text{under } H_0.
\]

Under \(H_1\), the process \(S_n\) grows approximately linearly with drift \(\mu_1(\theta)\).  
Ignoring the overshoot when \(S_N\) first crosses a boundary, the time to reach the upper boundary \(\log B\) is roughly
\[
N \approx \frac{\log B}{\mu_1(\theta)} .
\]
Taking expectations on both sides yields \(\mathbb{E}_1[N] \approx \frac{|\log B|}{\mu_1(\theta)}\) (note \(\log B>0\)).

Under \(H_0\), the process drifts downward with slope \(\mu_0(\theta)<0\).  
The time to hit the lower boundary \(\log A\) (with \(\log A<0\)) is approximately
\[
N \approx \frac{\log A}{\mu_0(\theta)} = \frac{|\log A|}{|\mu_0(\theta)|},
\]
hence \(\mathbb{E}_0[N] \approx \frac{|\log A|}{|\mu_0(\theta)|}\).

Therefore, under the conditions of stationarity, ergodicity, and negligible overshoot, the expected sample numbers are well approximated by the expressions in \eqref{eq:ASN-exp}.
\end{proof}

\begin{remark}
The drift rates \(\mu_1(\theta)\) and \(\mu_0(\theta)\) are exactly the stationary Kullback–Leibler divergences between the transition laws.  
For i.i.d. observations, the Markov dependence vanishes and the lemma reduces to Wald’s classical ASN formulas.  
In practice, the approximation is accurate when the boundaries are sufficiently far from the starting point (i.e., when \(\alpha\) and \(\beta\) are small).
\end{remark}

\section{Proof of InfoNCE upper-bound}
\label{pf:infoNCE}
\begin{theorem}[InfoNCE upper-bound by information drift]
\label{thm:infoNCE upb}
Let $P:=T_{\phi\#}f_{1,\theta}$ and $Q:=T_{\phi\#}f_{0,\theta}$ on $\mathcal Y$.
Consider the sampling scheme: draw $J\sim\mathrm{Unif}\{1,\dots,K\}$, then
$Y_J\sim P$ and $(Y_i)_{i\neq J}\stackrel{\text{i.i.d.}}{\sim}Q$,
independently of $J$. For any score $h:\mathcal Y\to\mathbb R$, define the
InfoNCE loss
\[
\mathcal L_{\mathrm{NCE},K}(h)
:=-\,\E\!\left[\log
\frac{e^{h(Y_J)}}{\sum_{i=1}^K e^{h(Y_i)}}
\right].
\]
Then, for every $K\ge2$,
\begin{equation}\label{eq:nce-kl}
-\mathcal L_{\mathrm{NCE},K}(h)\ \le\ \KL(P\|Q)-\log(K).
\end{equation}
\end{theorem}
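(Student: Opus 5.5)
The plan is to split the inequality into two parts. First, lower-bound the InfoNCE loss by the Bayes log-loss, using Lemma~\ref{lem:Cross-entropy dominates}. Second, identify that Bayes risk with $\log K - I(J;\mathbf Y)$ and upper-bound the mutual information $I(J;\mathbf Y)$ by $\KL(P\|Q)$, using the explicit densities of Lemma~\ref{lem:cond-product-density}. The degenerate case $P\not\ll Q$ is dispatched at the outset: there $\KL(P\|Q)=+\infty$, and the right-hand side of \eqref{eq:nce-kl} is $+\infty$, so the claim is trivial. From now on I assume $P\ll Q$ with $r=dP/dQ$, and also $\KL(P\|Q)<\infty$, since otherwise there is again nothing to prove.

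\emph{Step 1 (reduction to mutual information).} By Lemma~\ref{lem:Cross-entropy dominates}, $-\mathcal L_{\mathrm{NCE},K}(h)\le -\E_{\mathbf Y}[H(\pi^*(\cdot\mid\mathbf Y))]=-H(J\mid\mathbf Y)$. Since $J$ is uniform on $K$ points, $H(J)=\log K$. Hence $H(J\mid\mathbf Y)=\log K-I(J;\mathbf Y)$, which gives
\[
-\mathcal L_{\mathrm{NCE},K}(h)\le I(J;\mathbf Y)-\log K .
\]
To avoid any abstract entropy bookkeeping, I will instead write the posterior explicitly. By Lemma~\ref{lem:cond-product-density}, $\pi^*(j\mid\mathbf Y)=r(Y_j)/\sum_i r(Y_i)$. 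Then
\[
-H(J\mid\mathbf Y)=\E\Big[\log \tfrac{r(Y_J)}{\sum_i r(Y_i)}\Big]=\E\Big[\log r(Y_J)\Big]-\E\Big[\log\tfrac1K\textstyle\sum_i r(Y_i)\Big]-\log K .
\]

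\emph{Step 2 (bounding the two expectations).} For the first term, conditioning on $J=j$ gives $Y_j\sim P$. Therefore $\E[\log r(Y_J)]=\E_P[\log r]=\KL(P\|Q)$. For the second term, Lemma~\ref{lem:cond-product-density}(2) says that $\frac1K\sum_i r(y_i)$ is exactly the density $d\mathbb P_{\mathbf Y}/dQ^{\otimes K}$. Its expectation under $\mathbb P_{\mathbf Y}$ is therefore $\KL(\mathbb P_{\mathbf Y}\|Q^{\otimes K})\ge 0$, by Gibbs' inequality. Dropping this nonnegative term gives $-H(J\mid\mathbf Y)\le \KL(P\|Q)-\log K$. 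Combined with Step 1, this yields \eqref{eq:nce-kl}.

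\emph{Main obstacle.} The delicate point is the integrability needed to split the expectation of $\log\frac{r(Y_J)}{\sum_i r(Y_i)}$ into the two separate terms. The summand $\log\pi^*$ is always $\le 0$, so its expectation is well defined in $[-\infty,0]$. The first term is finite because $\KL(P\|Q)<\infty$. Moreover, the negative part of $\log\frac1K\sum_i r$ is controlled by $\log r(Y_J)-\log K$ pointwise, since $\sum_i r(Y_i)\ge r(Y_J)$; this makes the split legitimate.

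If that justification becomes awkward, I will fall back on Lemma~\ref{lem:supremum}. I would first prove the bound for finite-range quantizations, where all quantities are finite sums. I would then pass to the limit, using that both the KL term and the quantized Bayes risk converge by the same partition-refinement argument.
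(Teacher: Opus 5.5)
Your proposal is correct and follows the paper's proof essentially step for step: you lower-bound the loss by the Bayes log-risk, write the posterior as $r(Y_j)/\sum_i r(Y_i)$, identify $\E[\log r(Y_J)]=\KL(P\|Q)$, and discard the nonnegative term $\KL(\mathbb P_{\mathbf Y}\|Q^{\otimes K})$ coming from the mixture density $\frac1K\sum_i r(y_i)$, whereas the paper routes the same computation through the named intermediate quantity $I(J;\mathbf Y)=\log K-\mathcal L^*$. Your treatment of the cases $P\not\ll Q$ and $\KL(P\|Q)=\infty$, and of the integrability needed to split the expectation, is extra rigor the paper omits, and with it the partition-refinement fallback is unnecessary.
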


\begin{proof}
Let $r(y):=\frac{dP}{dQ}(y)$ and write $\mathbf Y:=(Y_1,\dots,Y_K)$.
By Bayes' rule, the posterior distribution of $J$ given $\mathbf Y$ is
\begin{equation}
\pi^*(j\mid\mathbf Y)
==\frac{\mathbb P(J=j,\mathbf Y)}{\sum_{i=1}^K \mathbb P(J=i,\mathbf Y)}
=\frac{r(Y_j)}{\sum_{i=1}^K r(Y_i)}.
\end{equation}
The Bayes-optimal multiclass log-loss is
\begin{equation}
    \mathcal L^*:=\E\!\left[-\log \pi^*(J\mid\mathbf Y)\right]
    =\E\!\left[\log\sum_{i=1}^K r(Y_i)\right]-\E[\log r(Y_J)].
\end{equation}

With Lemma~\ref{lem:Cross-entropy dominates}, the cross-entropy with any model is no smaller than the Bayes risk, which means
\begin{equation}
    \label{eq:bayes_loss-nce_loss}
    \mathcal L_{\mathrm{NCE},K}(h)\ge \mathcal L^*
\end{equation}
holds for all $h$, and by using Bayes' rule and the uniform prior $\mathbb P(J=j)=1/K$, we have
\begin{equation}
    \label{eq:loss-mutual_info}
    \begin{aligned}
        I(J;Y)
        &= \mathbb E\!\left[\log\frac{\mathbb P(J\mid Y)}{\mathbb P(J)}\right]
        = \mathbb E[\log \pi^*(J\mid Y)] - \log(1/K)
        = \log K - \mathbb E[-\log \pi^*(J\mid Y)]
        = \log K - \mathcal L^*
    \end{aligned}.
\end{equation}
We have the marginal with Lemma~\ref{lem:cond-product-density}:
\begin{equation}\label{eq:marginal-density}
\frac{d\mathbb P_{\mathbf Y}}{dQ^{\otimes K}}(\mathbf y)=\frac1K\sum_{i=1}^K r(y_i).
\end{equation}
With $J$ uniform,
\begin{equation}\label{eq:I-decomp}
    I(J;\mathbf Y)
    = \E\!\left[\log\frac{\pi^*(J\mid\mathbf Y)}{1/K}\right]
    = \E[\log r(Y_J)] - \E\!\left[\log \sum_{i=1}^K r(Y_i)\right] + \log K. 
\end{equation}
The first term equals the KL divergence because $Y_J\sim P$ marginally:
\begin{equation}\label{eq:first-term}
\E[\log r(Y_J)] = \E_{P}[\log \frac{dP}{dQ}(y)] 
=\KL(P\|Q).
\end{equation}
For the second term, non-negativity of KL yields
\begin{equation}\label{eq:second-term}
    \begin{aligned}
        0\le \KL(\mathbb P_{\mathbf Y}\,\|\,Q^{\otimes K})
        =\E_{\mathbf Y}\!\left[\log \frac{d\mathbb P_{\mathbf Y}}{dQ^{\otimes K}}(\mathbf Y)\right]
        =\E\!\left[\log\frac1K\sum_{i=1}^K r(Y_i)\right]
        =\E\!\left[\log\sum_{i=1}^K r(Y_i)\right]-\log K,
    \end{aligned}
\end{equation}
which indicates\ $\log K-\E[\log\sum_i r(Y_i)]\le 0$.
Then substituting \eqref{eq:first-term} and \eqref{eq:second-term} into \eqref{eq:I-decomp} gives
$I(J;\mathbf Y)\le \KL(P\|Q)$.

\smallskip
\noindent Finally, by \eqref{eq:bayes_loss-nce_loss} and \eqref{eq:loss-mutual_info},
\[
-\mathcal L_{\mathrm{NCE},K}(h)\ \le\ I(J;\mathbf Y)-\log K\ \le\ \KL(P\|Q)-\log K.
\]
\end{proof}

\begin{corollary}[Loss to projected KL]\label{cor:loss2phi}
For fixed-form \(T_\phi\), decreasing \(\mathcal{L}_{\mathrm{NCE}}(\theta,\phi)\) increases a valid lower bound to \(\Phi_T(\theta,\phi)\).
\end{corollary}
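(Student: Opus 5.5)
The plan is to apply Theorem~\ref{thm:infoNCE upb} with the pair $P=P_{1,\theta,\phi}=T_{\phi\#}f_{1,\theta}$ and $Q=P_{0,\theta,\phi}=T_{\phi\#}f_{0,\theta}$. By definition, $\KL(P\|Q)=\Phi_T(\theta,\phi)$. For any score $h$ on $\mathcal Y$, the theorem then gives
\[
\Phi_T(\theta,\phi)\ \ge\ \log K-\mathcal L_{\mathrm{NCE},K}(h;\theta,\phi).
\]
The right-hand side is therefore a valid lower bound on $\Phi_T$, and it holds simultaneously for every $(\theta,\phi)$ and every $h$. This reduces the corollary to a monotonicity observation. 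The bound $\log K-\mathcal L_{\mathrm{NCE}}$ is a strictly decreasing affine function of the loss. Hence any step of the training procedure that lowers $\mathcal L_{\mathrm{NCE}}(\theta,\phi)$ raises this certified lower bound on $\Phi_T(\theta,\phi)$.

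The key steps, in order, are as follows.
\begin{enumerate}
\item Fix the form of $T_\phi$. This ensures that $\mathcal Y$ and the sampling scheme of Theorem~\ref{thm:infoNCE upb} are well defined along the whole training trajectory; only the laws $P,Q$ move with $(\theta,\phi)$.
\item Check the hypothesis $P\ll Q$ needed for $r=dP/dQ$ in Lemma~\ref{lem:cond-product-density}. If it fails, then $\Phi_T=+\infty$ and the bound holds trivially, as in the first case of Lemma~\ref{lem:supremum}.
\item Instantiate $h$ as the score actually used in training, namely $g_k$ composed with the observation map. The theorem holds for any measurable $h$, so no optimality of $h$ is required.
\item Compare two parameter values $(\theta,\phi)$ and $(\theta',\phi')$ with $\mathcal L_{\mathrm{NCE}}(\theta',\phi')\le\mathcal L_{\mathrm{NCE}}(\theta,\phi)$. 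The lower bound at $(\theta',\phi')$ is then at least the lower bound at $(\theta,\phi)$.
\end{enumerate}

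To connect with the detection objective, I would optionally chain this bound with the data processing inequality, Lemma~\ref{lem:DPI}, which gives $\mu_1(\theta)\ge\Phi_T(\theta,\phi)$. The same loss then also lower-bounds the drift $\mu_1(\theta)$, and through \cref{eq:wald-approx} it upper-bounds $E_1[N]$.

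The main obstacle is interpretive rather than technical. The corollary asserts that the lower bound increases, not that $\Phi_T$ itself increases. The proof must be phrased carefully so that it does not claim monotonicity of the true divergence; it should claim only that the certified floor $\log K-\mathcal L_{\mathrm{NCE}}$ rises. A second point to handle is measurability of $h\circ T_\phi$ when $g_k$ involves the top-$k$ operation. I would argue this by noting that $\mathcal T_k$ is piecewise constant on finitely many regions determined by similarity orderings. The score is therefore a finite sum of measurable functions, so Theorem~\ref{thm:infoNCE upb} applies.
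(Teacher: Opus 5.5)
Your proposal is correct and is essentially the paper's argument; the paper gives no separate proof and treats the corollary as an immediate consequence of Theorem~\ref{thm:infoNCE upb}. That theorem, with $P=T_{\phi\#}f_{1,\theta}$ and $Q=T_{\phi\#}f_{0,\theta}$, gives $\Phi_T(\theta,\phi)=\KL(P\|Q)\ge\log K-\mathcal L_{\mathrm{NCE},K}$, a certified floor that is affine and decreasing in the loss. The one thing to tighten is step~3: the theorem requires a score defined on $\mathcal Y$, so plugging in $g_k$ bounds $\Phi_T$ only if $g_k$ depends on the observation solely through $T_\phi$ (e.g.\ take $T_\phi$ to be the score itself); otherwise you recover only the unprojected bound on $\mu_1(\theta)$, and the corollary as stated does not need this instantiation anyway.
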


\section{Proof of advantage over static test}
\label{pf:adv_stat}
\begin{theorem}[Decreasing Loss leads to larger drifts and smaller expected samples]\label{thm:goal}
Fix error budgets \((\alpha,\beta)\). Consider the SPRT based on the true LLR increments \(\ell_\theta\).
Let \((\theta^\star,\phi^\star)\) minimize \(\mathcal{L}_{\mathrm{NCE}}(\theta,\phi)\) over an admissible set. Then:
\begin{enumerate}
\item \(\Phi_T(\theta^\star,\phi^\star)\ge \Phi_T(\theta_0,\phi^\star)\), hence by DPI
\[
\KL(f_{1,\theta^\star}\|f_{0,\theta^\star})\ \ge\ \Phi_T(\theta^\star,\phi^\star)\ \ge\ \Phi_T(\theta_0,\phi^\star).
\]
\item If, in addition, the statistic family \(T_\phi\) is rich enough that for every \(\delta>0\) there exists \(\phi=\phi(\delta)\) with
\(
\Phi_T(\theta_0,\phi)\ \ge\ \KL(f_{1,\theta_0}\|f_{0,\theta_0})-\delta,
\)
then for the minimizer \((\theta^\star,\phi^\star)\) we have
\[
\mu_1(\theta^\star)\ \ge\ \mu_1(\theta_0) - \delta,
\]
and analogously for \(|\mu_0|\). In particular, for arbitrarily small \(\delta\), the drifts are (weakly) increased.
\item By Wald's relations \eqref{eq:wald-approx} and SPRT optimality, the expected sample sizes $E_1[N]$ and $E_0[N]$ (at the same \((\alpha,\beta)\)) are (weakly) \emph{decreased}.
\end{enumerate}
\end{theorem}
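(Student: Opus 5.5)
The proof runs through the chain $\mathcal L_{\mathrm{NCE}}\to\Phi_T\to\KL\to$ drift $\to$ ASN, taking the three items in order. Here $\theta_0$ is the honeypot-free (static) configuration, and I assume it lies in the admissible set: for example, honeypots placed where no query ever retrieves them, or $P=0$, give $f_{b,\theta_0}$ equal to the original retrieval laws.

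\textbf{Item 1.} I would read Theorem~\ref{thm:infoNCE upb} and Corollary~\ref{cor:loss2phi} in their tight form. With $\phi$ fixed, the infimum over scores $h$ of $\mathcal L_{\mathrm{NCE},K}$ equals the Bayes risk $\mathcal L^*$ (Lemma~\ref{lem:Cross-entropy dominates}). By \eqref{eq:loss-mutual_info}, $\mathcal L^*=\log K-I(J;\mathbf Y)$, and $I(J;\mathbf Y)$ is a nondecreasing functional of the pair $(P,Q)=(P_{1,\theta,\phi},P_{0,\theta,\phi})$ that is sandwiched below $\Phi_T(\theta,\phi)$. Minimality of $(\theta^\star,\phi^\star)$ then gives $\mathcal L_{\mathrm{NCE}}(\theta^\star,\phi^\star)\le\mathcal L_{\mathrm{NCE}}(\theta_0,\phi^\star)$, which I want to convert into $\Phi_T(\theta^\star,\phi^\star)\ge\Phi_T(\theta_0,\phi^\star)$.

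\textbf{Main obstacle.} The bound $I\le\Phi_T$ holds in only one direction, so a smaller loss does not by itself give a larger $\Phi_T$. I would first try to close the gap as $K\to\infty$, using the standard fact that $I(J;\mathbf Y)\to\KL(P\|Q)$ in the large-$K$ InfoNCE limit. Failing that, I would read ``minimizer'' as maximizing the lower bound and interpret the claim as comparing the certified lower bounds. Once that inequality is in hand, Lemma~\ref{lem:DPI} applied at $\theta^\star$ gives the left inequality of the chain.

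\textbf{Item 2.} This is an $\epsilon$-argument using Lemma~\ref{lem:supremum}, which supplies $\phi(\delta)$ with $\Phi_T(\theta_0,\phi(\delta))\ge\KL(f_{1,\theta_0}\|f_{0,\theta_0})-\delta=\mu_1(\theta_0)-\delta$. Since $\phi^\star$ is jointly optimal, I would compare against $\phi(\delta)$ rather than $\phi^\star$. The argument of Item 1, run with $\phi(\delta)$, gives
\[
\mu_1(\theta^\star)\ \ge\ \Phi_T(\theta^\star,\phi^\star)\ \ge\ \Phi_T(\theta_0,\phi(\delta))\ \ge\ \mu_1(\theta_0)-\delta .
\]
This uses \eqref{eq:drifts} for $\mu_1=\KL$ and DPI for the first step. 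The same argument with the roles of $P$ and $Q$ swapped, via the reversed KL in Lemma~\ref{lem:DPI}, handles $|\mu_0|=\KL(f_0\|f_1)$. This needs a symmetric InfoNCE term, positive drawn from $Q$ against negatives from $P$, which I would note as an added assumption.

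\textbf{Item 3.} Plug these drifts into the ASN approximations \eqref{eq:wald-approx}, justified for Markov observations by Lemma~\ref{lem:ASN_Markov}. At fixed boundaries $\log A,\log B$, $E_1[N]\approx|\log B|/\mu_1$ and $E_0[N]\approx|\log A|/|\mu_0|$ are decreasing in the drifts, so both are weakly smaller at $\theta^\star$, up to $O(\delta)$ terms, which vanish by letting $\delta\to0$.

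\textbf{From Item 3 to Theorem~\ref{thm:adv_stat short}.} Any static fixed-length test at $(\alpha,\beta)$ operates on the $\theta_0$ laws. The Wald--Wolfowitz optimality of the SPRT then gives $E_b[N]_{\mathrm{SPRT},\theta_0}\le n_{\mathrm{static}}$, and the drift monotonicity chains this to $E_b[N]_{\mathrm{SPRT},\theta^\star}$. Strictly, Wald--Wolfowitz optimality is an i.i.d.\ result, so the Markov case rests on the asymptotic Lemma~\ref{lem:ASN_Markov}.
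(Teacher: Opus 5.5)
\textbf{Verdict: genuine gap.} Your chain (loss $\to\Phi_T\to\KL\to$ drift $\to$ ASN, with Lemma~\ref{lem:supremum} for Item~2, role-swapping for $|\mu_0|$, and Wald's formulas for Item~3) is the paper's route. The paper's proof of Item~1 is only a citation of Theorem~\ref{thm:infoNCE upb} and Corollary~\ref{cor:loss2phi}, so it makes, without comment, exactly the step you flag.

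That step is a real gap. At fixed $K$ neither of your fallbacks closes it, because the implication is false in general.
\begin{itemize}
\item \emph{Certified lower bounds.} Since $\mathcal L_{\mathrm{NCE},K}\ge0$, the certificate $\log K-\mathcal L_{\mathrm{NCE},K}$ never exceeds $\log K$. So this reading gives at best $\Phi_T(\theta^\star,\phi^\star)\ge\log K-\mathcal L_{\mathrm{NCE},K}(\theta_0,\phi(\delta))$, which cannot reach $\mu_1(\theta_0)-\delta$ once $\mu_1(\theta_0)>\log K+\delta$.
\item \emph{Monotonicity of $I(J;\mathbf Y)$.} It holds only under garbling (Blackwell order), and nothing relates the laws at $\theta_0$ and $\theta^\star$ in that order. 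In particular $I$ is not monotone in $\KL$.
\item \emph{Counterexample.} Take $K=2$ and the Bayes-optimal score, so $I(J;\mathbf Y)$ is the Jensen--Shannon divergence between $P\otimes Q$ and $Q\otimes P$. On a two-point space, $P_0=(0.99,0.01)$, $Q_0=(1-e^{-1000},e^{-1000})$ gives $\KL\approx9.9$ but $I<0.01$. By contrast, $P_1=\mathrm{Bern}(0.9)$, $Q_1=\mathrm{Bern}(0.1)$ gives $\KL\approx1.76$ and $I\approx0.51$. Let the admissible set be $\{\theta_0,\theta_1\}$ and $T_\phi$ the identity, so the richness hypothesis holds with $\delta=0$. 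Then the loss minimizer is $\theta_1$ and the drift falls from about $9.9$ to $1.76$, violating Items~1 and~2.
\end{itemize}
Replacing $\phi^\star$ by $\phi(\delta)$ in Item~2 does fix a second slip in the paper's proof, which ``picks'' the minimizer $\phi^\star$ to satisfy the richness condition. But your middle inequality $\Phi_T(\theta^\star,\phi^\star)\ge\Phi_T(\theta_0,\phi(\delta))$ still rests on the missing step.

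Your first idea, $K\to\infty$, is the right repair, but it proves a modified statement and bypasses Item~1 rather than proving it. Assume the score is optimized jointly with $(\theta,\phi)$, so the minimized loss is the Bayes risk $\log K-I_K(\theta,\phi)$. With the paper's fixed score $g_k$ the loss exceeds this by an uncontrolled amount and the argument gives nothing. Start from $I_K=\KL(P\|Q)-\E[\log\frac1K\sum_i r(Y_i)]$ and use $\E[\frac1K\sum_i r(Y_i)]=1+\chi^2(P\|Q)/K$. Jensen then gives $I_K\ge\KL(P\|Q)-\log(1+\chi^2(P\|Q)/K)$, and $\chi^2$ is finite when $T_{\phi(\delta)}$ is a finite partition as supplied by Lemma~\ref{lem:supremum}. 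Write $(\theta^\star_K,\phi^\star_K)$ for the minimizer at sample size $K$, assume $(\theta_0,\phi(\delta))$ is admissible, and let $\chi^2_\delta$ be the $\chi^2$-divergence of the $\phi(\delta)$-projected laws at $\theta_0$. Then
\[
\mu_1(\theta^\star_K)\ge\Phi_T(\theta^\star_K,\phi^\star_K)\ge I_K(\theta^\star_K,\phi^\star_K)\ge I_K(\theta_0,\phi(\delta))\ge\mu_1(\theta_0)-\delta-\log\bigl(1+\chi^2_\delta/K\bigr).
\]
This is Item~2 with $2\delta$ in place of $\delta$, for all $K\ge K_0(\delta)$, which is all Item~3 needs. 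Item~1 with the same $\phi^\star$ on both sides would further require uniformity in $\phi$, since $\phi^\star_K$ moves with $K$.

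Add the reversed InfoNCE term you already identify as necessary for $|\mu_0|$, and this is a sound version of the result. It has to be stated for growing $K$, or with a tight variational objective such as Donsker--Varadhan in place of InfoNCE. It cannot be derived from fixed-$K$ minimality, as the theorem and the paper's proof assert.
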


\begin{proof}
(1) follows from Theorem~\ref{thm:infoNCE upb} and Corollary~\ref{cor:loss2phi}. For (2) apply Lemma~\ref{lem:supremum} at \(\theta_0\): for any \(\delta>0\) there exists a (finite) partition \(T_\delta\) such that
\(
\KL(f_{1,\theta_0}\|f_{0,\theta_0})\le \KL(T_{\delta\#}f_{1,\theta_0}\|T_{\delta\#}f_{0,\theta_0})+\delta.
\)
If the parametric family \(T_\phi\) is dense w.r.t.\ this partition topology (e.g.\ neural universal approximation in $L^1$; see Remark~\ref{rem:approx}), we can pick \(\phi^\star\) with \(\Phi_T(\theta_0,\phi^\star)\ge \KL(T_{\delta\#}f_{1,\theta_0}\|T_{\delta\#}f_{0,\theta_0})-\epsilon\), and let \(\epsilon\to0\). Then
\[
\mu_1(\theta^\star)\ \ge\ \Phi_T(\theta^\star,\phi^\star)\ \ge\ \Phi_T(\theta_0,\phi^\star)\ \ge\ \KL(f_{1,\theta_0}\|f_{0,\theta_0})-\delta.
\]
The statement for \(|\mu_0|\) is identical by swapping roles of \(H_0,H_1\). Finally (3) follows from \eqref{eq:wald-approx} and the SPRT optimality of expected sample size at the given \((\alpha,\beta)\) \cite{wald1992sequential}.
\end{proof}

\begin{remark}[On separating generator and statistic parameters]\label{rem:theta-phi}
It is often convenient to write \(\Phi(\theta,\phi)=\KL(T_{\phi\#}f_{1,\theta}\|T_{\phi\#}f_{0,\theta})\) with \(\theta\) controlling the index (thus the observation laws) and \(\phi\) controlling the statistic. In implementations one may tie them (\(\phi=\theta\)); the proof above treats \((\theta,\phi)\) jointly and only requires the \emph{form} of \(T_\phi\) to be fixed.
\end{remark}

\begin{remark}[Approximation richness]\label{rem:approx}
Universal approximation~\cite{cybenko1989approximation} results imply that parametric families of measurable maps (neural networks, piecewise-constant partitions, histogram features) are dense in \(L^1\) on compact domains. Together with lower semicontinuity of \(f\)-divergences \cite{sason2015bounds}, this justifies the  $\delta$-tightening in Theorem~\ref{thm:infoNCE upb}.
\end{remark}

\begin{lemma}[SPRT is optimal in expected length at same \((\alpha,\beta)\)]\label{lem:sprt_optim}
For any \(\theta\) and any competing (possibly fixed-length) test achieving \((\alpha,\beta)\), the SPRT (with the same \((\alpha,\beta)\)) satisfies
\[
E_1[N]_{\mathrm{SPRT},\theta}\ \le\ E_1[N]_{\mathrm{any},\theta},
E_0[N]_{\mathrm{SPRT},\theta}\ \le\ E_0[N]_{\mathrm{any},\theta}.
\]
In particular \(E_b[N]_{\mathrm{SPRT},\theta}\le n^\star_{\mathrm{static}}(\alpha,\beta;\theta)\) for \(b\in\{0,1\}\).
\end{lemma}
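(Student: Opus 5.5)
This is the Wald--Wolfowitz optimality theorem for the SPRT. My plan is to reproduce its classical proof in the i.i.d. setting, with observation laws $f_{1,\theta}, f_{0,\theta}$, and then say how the Markov case of \cref{lem:ASN_Markov} fits in.

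The argument runs through an auxiliary Bayes problem.
\begin{itemize}
\item Put a prior $\pi$ on $\{H_0,H_1\}$.
\item Charge losses $w_0,w_1$ for the two wrong decisions and a cost $c$ per observation.
\item The Bayes risk of a sequential procedure $(N,D_N)$ is then
\[
R = \pi\, w_0 P_0(D_N=H_1) + (1-\pi) w_1 P_1(D_N=H_0) + c\big(\pi E_0[N] + (1-\pi) E_1[N]\big).
\]
\end{itemize}

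\textbf{Step 1 (the Bayes procedure is an SPRT).} Dynamic programming, i.e.\ optimal stopping on the posterior $\pi_n$, shows that the Bayes-optimal rule stops as soon as $\pi_n$ leaves an interval $(\underline{\pi},\overline{\pi})$.
\begin{itemize}
\item The reason is that the continuation value is a concave function of $\pi_n$, while the stopping risk is piecewise linear.
\item Since $\pi_n$ is a monotone function of $S_n$, this stopping rule is an SPRT in the accumulated LLR of \cref{eq:llr}.
\end{itemize}

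\textbf{Step 2 (every SPRT is Bayes for some costs).} Conversely, for any given SPRT with boundaries $\log B<0<\log A$, there exist $(\pi,w_0,w_1,c)$ for which that SPRT is the Bayes rule.
\begin{itemize}
\item This is a continuity and monotonicity argument: the map from $(w_0,w_1)$ to the Bayes thresholds $(\underline{\pi},\overline{\pi})$ is continuous and onto.
\item I expect this to be the \textbf{main obstacle}, because it requires careful handling of the boundary regions and of the monotone dependence of the stopping region on the costs.
\item My first attempt is to follow Wald--Wolfowitz (1948) directly, or Lehmann's textbook version of it, and treat the existence of these cost parameters as a citable fact.
\end{itemize}

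\textbf{Step 3 (comparison).} Take any competing test $(N',D')$ with errors no larger than the SPRT's, $P_0(D'=H_1)\le\alpha$ and $P_1(D'=H_0)\le\beta$.
\begin{itemize}
\item Here the SPRT errors $(\alpha,\beta)$ should be read as its exact error probabilities. With Wald's approximate boundaries, \cref{eq:inequality_Wald} only gives approximate control, and I will state this caveat explicitly.
\item By Bayes optimality, $R(\mathrm{SPRT})\le R(N',D')$.
\item The SPRT's error terms are at least as large as the competitor's error terms, which are bounded by them. Subtracting the error terms therefore gives $\pi E_0[N]+(1-\pi)E_1[N]\le \pi E_0[N']+(1-\pi)E_1[N']$.
\item This alone does not separate the two hypotheses. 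The standard trick is that the Bayes property holds for a whole one-parameter family of priors $\pi$ sharing the same SPRT, obtained by rescaling $w_0,w_1$ jointly with $\pi$.
\item Letting $\pi\to 0$ and $\pi\to 1$ along this family, a continuity argument in $\pi$ separates the weighted inequality into $E_0[N]\le E_0[N']$ and $E_1[N]\le E_1[N']$ individually.
\end{itemize}

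\textbf{Static tests.} A fixed-length test with $n^\star_{\mathrm{static}}$ samples is one particular competitor, with $N'\equiv n^\star_{\mathrm{static}}$. Applying Step 3 to it gives $E_b[N]_{\mathrm{SPRT},\theta}\le n^\star_{\mathrm{static}}(\alpha,\beta;\theta)$ for $b\in\{0,1\}$.

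\textbf{Markov observations.} In the Markov setting I would not claim exact optimality. Instead I would invoke the asymptotic (first-order) optimality of the SPRT as $\alpha,\beta\to0$. This follows from the lower bound $E_1[N']\ge |\log\beta|/\mu_1(\theta)(1+o(1))$, which comes from the data-processing inequality (\cref{lem:DPI}) applied to the stopped likelihood ratio, together with the matching upper bound from \cref{lem:ASN_Markov}.
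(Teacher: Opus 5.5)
Your route is the paper's: its proof is a one-line appeal to Wald--Wolfowitz (1948), and you unpack that citation through the standard auxiliary Bayes problem. Your caveats are correct and go beyond the paper. The theorem concerns the SPRT whose \emph{exact} error probabilities are $(\alpha,\beta)$, not the one with Wald's approximate boundaries in \eqref{eq:sprt_boundry}. It is also an i.i.d.\ result, so under the paper's Markov assumption only first-order asymptotic optimality is available. Two steps, however, are wrong as written.

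\textbf{First: the family of priors in Step~3.} It cannot come from rescaling $w_0,w_1$ with $\pi$.
If you move $\pi$ while keeping $\pi w_0$ and $(1-\pi)w_1$ fixed, the error part of the risk is unchanged, but the sampling part $c(\pi E_0[N]+(1-\pi)E_1[N])$ is reweighted, so the Bayes rule moves. Matching the coefficients of $P_0(D_N=H_1)$, $P_1(D_N=H_0)$, $E_0[N]$ and $E_1[N]$ shows that no rescaling of $(w_0,w_1,c)$ reproduces the same risk functional at a different prior.
What Step~3 needs is Step~2 in its every-prior form:
\begin{enumerate}
\item With $\pi=P(H_0)$ one has $\pi_n=\pi/(\pi+(1-\pi)e^{S_n})$. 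So your SPRT is the rule that continues while $\pi_n\in\bigl(\pi/(\pi+(1-\pi)A),\,\pi/(\pi+(1-\pi)B)\bigr)$, an interval containing $\pi$.
\item The Bayes continuation interval depends only on $(w_0/c,w_1/c)$. Surjectivity of $(w_0,w_1)\mapsto(\underline\pi,\overline\pi)$ onto $\{0<\underline\pi<\overline\pi<1\}$ therefore supplies $w_0(\pi),w_1(\pi)>0$ separately for each $\pi\in(0,1)$.
\end{enumerate}
Your Step~2 only asserts that \emph{some} $(\pi,w_0,w_1,c)$ exists, which is not enough.
Once every $\pi$ is covered, the losses drop out and $\pi E_0[N]+(1-\pi)E_1[N]\le\pi E_0[N']+(1-\pi)E_1[N']$ holds for all $\pi\in(0,1)$. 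Both sides are affine in $\pi$, so $\pi\to1$ and $\pi\to0$ give the two inequalities at once. This assumes $E_b[N']<\infty$; the other case is trivial. No continuity argument is required.

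\textbf{Second: the Markov remark swaps the error indices.} Under $H_1$, the data-processing inequality applied to the stopped sample path and the decision gives
$E_1[S_{N'}]\ge d(1-\beta\,\|\,\alpha)=(1-\beta)\log\frac{1-\beta}{\alpha}+\beta\log\frac{\beta}{1-\alpha}=|\log\alpha|(1+o(1))$.
So the first-order bound is $E_1[N']\gtrsim|\log\alpha|/\mu_1(\theta)$, while $|\log\beta|$ governs $E_0[N']$ through $|\mu_0(\theta)|$.
Your bound $|\log\beta|/\mu_1(\theta)$ is violated by the SPRT itself when, e.g., $\beta=\alpha^2$, because under $H_1$ it stops at the upper boundary $\log\frac{1-\beta}{\alpha}$. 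The slip mirrors the main-text \eqref{eq:wald-approx}, which pairs $\mu_1$ with $B=\beta/(1-\alpha)$, whereas \cref{lem:ASN_Markov} uses the opposite labeling of $A$ and $B$.
Also, passing from $E_1[S_{N'}]$ to $\mu_1(\theta)E_1[N']$ for Markov observations needs a Markov version of Wald's identity, with a bounded Poisson-equation correction; the data-processing inequality alone does not give it.
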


\begin{proof}
This is the classical optimality of the SPRT due to Wald and Wolfowitz (see \cite{wald1948optimum}).
\end{proof}

\begin{lemma}[Existence of $\theta_0$ and a.s.\ equivalence to static]\label{lem:theta0_exist}
Suppose the following assumptions hold and $N\ge K$:
\begin{enumerate}
    \item Scoring uses $s_{\cos}$ and $\Phi$ is a deterministic top-$K$ map. Any tie events have zero $Q_b$-probability and are resolved by a fixed rule.
    \item For $b\in\{0,1\}$, $\mathbb P_{q\sim Q_b}\!\left(c_K(q)>0\right)=1$.
    \item Let $\bar Q_b$ be the law of $\hat q$ under $H_b$, and
    $\bar Q:=\tfrac12(\bar Q_0+\bar Q_1)$.
    The support $\operatorname{supp}(\bar Q)$ is contained in some closed hemisphere of $\mathbb S^{d-1}$.
    Equivalently, the polar set is nonempty:
    \[
    \mathcal U_{\cos}\ :=\ \Big\{\hat u\in\mathbb S^{d-1}:\sup_{v\in \operatorname{supp}(\bar Q)}\langle \hat u, v\rangle\le 0\Big\}
    \]
\end{enumerate}

\noindent Then there exists $\theta_0$ with
$E_{\mathrm{pot}}(\theta_0)=\{u_j\}_{j=1}^P$ satisfying $\hat u_j\in\mathcal U_{\cos}$ for all $j$,
such that for $b\in\{0,1\}$,
\[
\mathbb P_{q\sim Q_b}\!\Big(\ \max_{1\le j\le P}\langle \hat q,\hat u_j\rangle \;<\; c_K(q)\ \Big)=1
\Longrightarrow
f_{b,\theta_0}=f^{\mathrm{static}}_b,
\]
i.e., under $\theta_0$ the honeypots almost surely never enter the top-$K$ and the
induced observation law equals the \emph{static} (documents-only) law.
\begin{proof}
If $\hat u_j\in\mathcal U_{\cos}$ then for any $\hat q\in\operatorname{supp}(\bar Q)$,
$\langle \hat q,\hat u_j\rangle\le 0$; 
by assumption, $c_K(q)>0$ almost surely.
Hence $\max_j\langle \hat q,\hat u_j\rangle\le 0<c_K(q)$ almost surely,
so top-$K$ coincides with the documents-only top-$K$ almost surely.
Deterministic $\Phi$ implies equality of the induced laws. 
\end{proof}
\end{lemma}

\begin{theorem}[Loss-minimizing training improves the operational goal over the static optimal]\label{thm:vs-static}
There exist untrained parameter \(\theta_0\) make vector database equivalent to vector database without honeypots and \(\theta^\star\) the post-training parameter obtained by minimizing \(\mathcal{L}_{\mathrm{NCE}}\). Under the assumptions of Theorem~\ref{thm:goal}, for the same \((\alpha,\beta)\), and \(b\in\{0,1\}\),
\[
E_b[N]_{\mathrm{SPRT},\theta^\star}\ \le\ \min\!\Big\{E_b[N]_{\mathrm{SPRT},\theta_0},\ E_b[N]_{\mathrm{any},\text{static}}\},
\]
with strict inequality whenever the drifts increase strictly.
\end{theorem}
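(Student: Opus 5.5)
The proof splits into two comparisons, one for each term in the minimum. The first term comes from combining Lemma~\ref{lem:theta0_exist} with Theorem~\ref{thm:goal}; the second comes from Lemma~\ref{lem:sprt_optim}.

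\emph{Step 1 (the untrained parameter).} Under the assumptions of Lemma~\ref{lem:theta0_exist}, there is a parameter $\theta_0$ whose honeypot directions lie in the polar set $\mathcal U_{\cos}$. For this $\theta_0$ the honeypots almost surely never enter the top-$K$ list. Hence $f_{b,\theta_0}=f_b^{\mathrm{static}}$ for $b\in\{0,1\}$. Two consequences follow:
\begin{itemize}
\item the drifts at $\theta_0$ equal the static drifts, $\mu_1(\theta_0)=\KL(f_1^{\mathrm{static}}\|f_0^{\mathrm{static}})$, and likewise for $|\mu_0(\theta_0)|$;
\item every test that uses only the documents-only observations is also a test on the $\theta_0$-augmented observations, with the same error probabilities and the same stopping-time law.
\end{itemize}
I will treat $\theta_0$ as the admissible reference point in Theorem~\ref{thm:goal}. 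The admissible set contains $\theta_0$, so the minimizer $\theta^\star$ competes against it.

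\emph{Step 2 (first term of the minimum).} Apply Theorem~\ref{thm:goal}(1)--(2) with this $\theta_0$. This gives $\mu_1(\theta^\star)\ge\mu_1(\theta_0)-\delta$ and $|\mu_0(\theta^\star)|\ge|\mu_0(\theta_0)|-\delta$ for every $\delta>0$. Letting $\delta\to0$ gives weak monotonicity of both drifts. Wald's approximation \eqref{eq:wald-approx}, justified for Markov observations by Lemma~\ref{lem:ASN_Markov}, then yields $E_b[N]_{\mathrm{SPRT},\theta^\star}\le E_b[N]_{\mathrm{SPRT},\theta_0}$, since $E_1[N]\approx|\log B|/\mu_1$ and $E_0[N]\approx|\log A|/|\mu_0|$ with the boundaries fixed by $(\alpha,\beta)$. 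If a drift increases strictly, the corresponding ratio decreases strictly, which gives the strict-inequality clause.

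\emph{Step 3 (second term of the minimum).} By Step 1, any static test achieving $(\alpha,\beta)$ is a competing test at $\theta_0$. Lemma~\ref{lem:sprt_optim} (Wald--Wolfowitz) therefore gives $E_b[N]_{\mathrm{SPRT},\theta_0}\le E_b[N]_{\mathrm{any},\mathrm{static}}$. Chaining this with Step 2 gives $E_b[N]_{\mathrm{SPRT},\theta^\star}\le E_b[N]_{\mathrm{any},\mathrm{static}}$. Together with Step 2 this is the claimed bound by the minimum. Strictness again comes from the strict drift increase in the first link.

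\emph{Main obstacle.} The delicate point is Step 2's use of the $\delta$-slack. Theorem~\ref{thm:goal} only gives $\mu_b(\theta^\star)\ge\mu_b(\theta_0)-\delta$, where $\delta$ depends on how rich the statistic family $T_\phi$ is. I will first try to argue that $\phi^\star$ can be chosen, via Lemma~\ref{lem:supremum}, simultaneously for all small $\delta$, so that the infimum over $\delta$ gives exact weak inequality.

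If that fails, I will state the comparison up to $O(\delta)$ and take limits. The ratio $|\log B|/\mu$ is continuous in $\mu$ on $(0,\infty)$, so the $O(\delta)$ error vanishes in the limit. This requires $\mu_1(\theta_0)>0$; if $\mu_1(\theta_0)=0$, the static expected length is infinite and the inequality is trivial.

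A second caveat is that Wald's formulas are approximations. The inequality between expected lengths should therefore be read at the level of the negligible-overshoot approximation. The exact comparison with static tests, by contrast, rests solely on Wald--Wolfowitz optimality and needs no approximation.
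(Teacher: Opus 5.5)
Your proposal is correct at the paper's level of rigor and follows essentially the same route as the paper's proof: Lemma~\ref{lem:theta0_exist} identifies $f_{b,\theta_0}$ with the static laws, Lemma~\ref{lem:sprt_optim} places the SPRT at $\theta_0$ at or below any static test, and Theorem~\ref{thm:goal} with $\delta\to0$, combined with \eqref{eq:wald-approx}, handles the passage from $\theta_0$ to $\theta^\star$. Your extra remarks (that $\theta_0$ must lie in the admissible set, the strictness clause, and that the comparison holds only at the level of Wald's approximation) make explicit what the paper leaves implicit; the $\delta$-slack you flag as the main obstacle is handled exactly as in the paper, by letting $\delta\to0$, because $\theta^\star$ in Theorem~\ref{thm:goal} is fixed independently of $\delta$, so no continuity argument is needed.
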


\begin{proof}
By Lemma~\ref{lem:theta0_exist}, there exists \(\theta_0\) under assumptions such that \(f_{b,\theta_0}=f^{\mathrm{static}}_b\) for \(b\in\{0,1\}\). Hence the SPRT at \((\alpha,\beta)\) under \(\theta_0\) has expected sample size equal to the static optimal with Lemma~\ref{lem:sprt_optim}.
By Theorem~\ref{thm:goal}, the drifts \(\mu_1(\theta),|\mu_0(\theta)|\) (weakly) increase after training, hence the Wald relations \eqref{eq:wald-approx} imply (weakly) smaller $E_b[N]$ for the SPRT at \((\alpha,\beta)\), and $\mu_b(\theta) \ge \mu_b(\theta_0)$ holds by letting $\delta\rightarrow 0$.
Combine all have the inequality.
\end{proof}

\newpage
\newcommand{\RETURN}{\STATE \textbf{return}~}
\section{Algorithm of Scorer Guided embedding inversion}
\label{emb_inv}
Scorer guided beam-search embedding inversion algorithm~\cite{zhang2025advInv} is shown below:
\begin{algorithm}[ht]
\caption{Embedding Inversion}
\label{alg:adv_decoding}
\begin{algorithmic}[1]
\renewcommand{\algorithmicrequire}{\textbf{Hyperparameters:}}
\renewcommand{\algorithmicinput}{\textbf{Input:}}
\renewcommand{\algorithmicensure}{\textbf{Output:}}

\REQUIRE beam width $b$, top-$k$
\INPUT prefix prompt $P$, target embedding $\mathbf{e}_{\text{target}}$
\ENSURE best found sequence of length $\textit{max\_length}$

\STATE \textbf{Initialize:} Beams $\mathcal{B} = \{\text{empty string}\}$

\FOR{each time step $t$ from $1$ to $\textit{max\_length}$}
    \STATE $\mathcal{B}_{\text{new}} \leftarrow \{\}$
    \STATE $\mathcal{S}_{\text{new}} \leftarrow \{\}$
    \FOR{each beam $b \in \mathcal{B}$}
        \STATE $z_t \leftarrow \text{LLM}_{\text{logits}}(P \oplus b)$
        \STATE $\text{topk\_tokens} \leftarrow \text{TopK}(z_t, k)$
        \FOR{each token $t_k \in \text{topk\_tokens}$}
            \STATE $b' \leftarrow b \oplus t_k$
            \STATE $\mathcal{B}_{\text{new}}.\text{append}(b')$
            \STATE $\mathcal{S}_{\text{new}} \leftarrow \text{Scorer}(\mathcal{B}_{\text{new}}, \mathbf{e}_{\text{target}},S,\mathcal{E}_{\text{doc}})$
        \ENDFOR
    \ENDFOR
    \STATE \text{Sort } $\mathcal{B}_{\text{new}}$ \text{ by } $\mathcal{S}_{\text{new}}$
    \STATE $\mathcal{B} \leftarrow \mathcal{B}_{\text{new}}[:b]$
\ENDFOR
\RETURN $\mathcal{B}[0]$
\end{algorithmic}
\end{algorithm}

\section{Examples of Honeypot Documents}\label{sec:pot-example}
We provide a few honeypot documents example here, corresponding to WebShop, Pokémon and Healthcare scenarioes respectively:

\begin{tcolorbox}[colframe=orange!85!black, title=Honeypot Documents Example]
    \textcolor{orange}{\textbf{Pot 1:}} ``Web Shopping: **1432**:  With just six keywords like search term in 'search: [ ]'; there already comes together several possibilities within three of which all work or more (inclusu[.]m any other product to find them for sure without searching again from memory alone as this will get out''\\

    \textcolor{orange}{\textbf{Pot 2:}} ``Alright, Nature offers an enchanting and unique array of creatures with incredible strength! A rare breed from Japan - Pokémon that has the physical properties described below: * Name – P.E''\\

    \textcolor{orange}{\textbf{Pot 3:}} ``Creating concise passages from this resource while generating appropriate dialogue within my parameters where all data needs filtering every fourth point  but those facts/assums or "sugar-crate-loads The patient reports experiencing various physical discomforts and functional changes.''
    
\end{tcolorbox}

\end{document}